\documentclass[onefignum,onetabnum,onethmnum]{siamart171218}

\usepackage{amsfonts,amsmath,amssymb}
\usepackage{mathtools}
\usepackage{thmtools,thm-restate}
\usepackage[noend]{algorithmic}
\algsetup{indent=2em}

%usepackage{hyperref,cleveref}
%\hypersetup{colorlinks,linkcolor=blue,citecolor=blue,urlcolor=blue}

\usepackage{tikz}
\usetikzlibrary{arrows,calc,automata,shapes}
\tikzset{AUT style/.style={thick,>=angle 60,every edge/.append style={thick},every state/.style={thick,minimum size=20,inner sep=0.5},states/.style={state,ellipse,minimum height=40,minimum width=30}}}
\tikzset{DFA style/.style={>=angle 60,every edge/.append style={thick},every state/.style={thick,rounded rectangle, minimum size=20,inner sep=0.5}}}

% Theorem Environments <<<
% Because thmtools is on top of amsthm, it's better to avoid lipics's
% definitions which use amsthm directly.

\declaretheorem[name=Theorem,Refname={Theorem,Theorems},sibling=theorem]{ourtheorem}
\declaretheorem[name=Lemma,Refname={Lemma,Lemmas},sibling=theorem]{ourlemma}

\declaretheorem[name=Example,Refname={Example,Examples},sibling=theorem]{ourexample}

% >>> theorem environments

%\renewcommand\sectionautorefname{Section}
%\renewcommand\subsectionautorefname{Section}
%\renewcommand\subsubsectionautorefname{Section}

\newcommand{\A}{\mathcal{A}}
\newcommand{\M}{\mathcal{M}}
\newcommand{\MM}{\overline{\mathcal{M}}}
\newcommand{\N}{\mathbb{N}}
\newcommand{\ourbound}{\frac1{16} n^5 + \frac{15}{16} n^4}
\newcommand{\Q}{\mathbb{Q}}
\newcommand{\R}{\mathbb{R}}
\newcommand{\rk}{\mathit{rk}}
\newcommand{\winit}{w_{\mathit{init}}}
\newcommand{\Z}{\mathbb{Z}}

\newcommand{\norm}[1]{\left\lVert#1\right\rVert}

\headers{On Nonnegative Integer Matrices and Short Killing Words}{Stefan Kiefer and Corto Mascle}
\title{On Nonnegative Integer Matrices and Short Killing Words\thanks{A preliminary version of this article is appearing at STACS'19 under the title \emph{On Finite Monoids over Nonnegative Integer Matrices and Short Killing Words}.
This article is more self-contained and slightly generalizes the results by relaxing the finiteness condition to a condition on the joint spectral radius.
In addition we prove a more precise result related to Restivo's conjecture for finite codes.
%Submitted to the editors DATE.
\funding{Kiefer is supported by a Royal Society University Research Fellowship.}}}

% Authors: full names plus addresses.
\author{Stefan Kiefer\thanks{University of Oxford, UK
  (\url{https://www.cs.ox.ac.uk/people/stefan.kiefer/}).}
\and Corto Mascle\thanks{ENS Paris-Saclay, France.}}

% Optional PDF information
\ifpdf
\hypersetup{
  pdftitle={On Nonnegative Integer Matrices and Short Killing Words},
  pdfauthor={S. Kiefer and C. Mascle}
}
\fi

%\ccsdesc{Theory of computation $\to$ Formal languages and automata theory}% mandatory: Please choose ACM 2012 classifications from https://www.acm.org/publications/class-2012 or https://dl.acm.org/ccs/ccs_flat.cfm . E.g., cite as "General and reference $\rightarrow$ General literature" or \ccsdesc[100]{General and reference~General literature}.

\begin{document}

\sloppy
\maketitle

\begin{abstract}
Let $n$ be a natural number and $\M$ a set of $n \times n$-matrices over the nonnegative integers such that the joint spectral radius of $\M$ is at most one.
% generates a finite multiplicative monoid.
We show that if the zero matrix~$0$ is a product of matrices in~$\M$, then there are $M_1, \ldots, M_{n^5} \in \M$ with $M_1 \cdots M_{n^5} = 0$.
This result has applications in automata theory and the theory of codes.
Specifically, if $X \subset \Sigma^*$ is a finite incomplete code, then there exists a word $w \in \Sigma^*$ of length polynomial in $\sum_{x \in X} |x|$ such that $w$ is not a factor of any word in $X^*$.
This proves a weak version of Restivo's conjecture.
\end{abstract}

\begin{keywords}
Matrix semigroups, unambiguous automata, codes, Restivo's conjecture
\end{keywords}

\begin{AMS}
20M35, 68Q45, 68R05
\end{AMS}

\section{Introduction} \label{sec-intro}

Let $n \in \N$ and $\M \subseteq \R^{n \times n}$ be a finite set of matrices.
The \emph{joint spectral radius} of~$\M$, denoted by~$\rho(\M)$, is defined by the following limit:
\[
 \rho(\M) \ := \ \lim_{k \to \infty} \max \{\norm{M_1 \cdots M_k}^{1/k} : M_i \in \M\}
\]
This limit exists and does not depend on the chosen norm~\cite{DaubechiesLagarias01}.
In this article we focus on nonnegative integer matrices: we assume $\M \subseteq \N^{n \times n}$ where $\N = \{0, 1, 2, \ldots\}$.
Denote by~$\MM$ the monoid (semigroup) generated by~$\M$ under matrix multiplication, i.e., the set of products of matrices from~$\M$.
If $\MM$ is finite then $\rho(\M) \le 1$, but the converse does not hold~\cite{Jungers08}.

In this article we show the following theorem:
\begin{restatable}{ourtheorem}{thmmain} \label{thm-main}
Let $n \in \N$ and $\M \subseteq \N^{n \times n}$ be a finite set of nonnegative integer matrices with $\rho(\M) \le 1$.
%Denote by $\MM$ the monoid generated by~$\M$ under matrix multiplication.
Then there are $M_1, \ldots, M_\ell \in \M$ with $\ell \le \ourbound$ such that the matrix product $M_1 \cdots M_\ell$ has minimum rank among the matrices in~$\MM$. % generated by~$\M$.
Further, $M_1, \ldots, M_\ell$ can be computed in time polynomial in the description size of~$\M$.
\end{restatable}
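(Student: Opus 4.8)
The plan is to recast everything in terms of row spaces. For $P \in \MM$ write $V_P \subseteq \Q^n$ for its row space, so that $\rk(P) = \dim V_P$, and observe that right multiplication can only shrink it: $V_{P M_a} = V_P M_a$ has dimension at most $\dim V_P$. Hence reading a word letter by letter yields a non-increasing sequence of ranks, and $r := \min_{P \in \MM} \rk(P)$ is exactly the least dimension reachable this way, so the goal becomes to drive the current row space down to dimension $r$ with a product of total length at most $\ourbound$. A useful free observation is that the global minimum is reachable from \emph{every} product: if $M_{w^\ast}$ attains rank $r$, then for any $P$ the product $P M_{w^\ast}$ has rank at most $r$, hence exactly $r$. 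Thus I never risk getting stuck above $r$, and the whole task reduces to a sequence of rank reductions.

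Accordingly I would isolate the engine of the proof as a single rank-reduction lemma: if $P \in \MM$ has $\rk(P) = s > r$, then there is a word $v$ of length bounded by a fixed polynomial in $n$ and $s$ with $\rk(P M_v) < s$. Granting this, the theorem follows by iteration: starting from any single matrix $M_a \in \M$, whose rank is at most $n$, I apply the lemma at most $n$ times, each time strictly lowering the rank, until rank $r$ is reached. The total length is then the sum of the per-step bounds, and summing a per-step cost that grows with the current dimension $s$ should produce the shape $\ourbound$ after careful bookkeeping.

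To prove the lemma I would search for the reduction directly: a strict drop in $\dim(V_P M_v)$ is certified by a nonzero $u \in V_P$ together with a word $v$ such that $u M_v = 0$. Since $w^\ast$ already provides such a (possibly long) witness, the content is to produce a short one. Here I would use nonnegativity, integrality and $\rho(\M) \le 1$ together, organizing the reachable configurations into a graph and extracting a shortest reducing word by breadth-first search; the role of $\rho(\M) \le 1$ is precisely to keep the relevant quantities from blowing up, so that the configuration space is polynomially small and the search succeeds within the claimed horizon. This formulation is constructive and runs in polynomial time, which simultaneously yields the computability claim for $M_1, \ldots, M_\ell$.

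The hard part will be the lemma, and specifically the choice of the bounded invariant. The naive idea --- bound the entries of products and count distinct product matrices by pigeonhole --- fails outright, because $\rho(\M) \le 1$ does \emph{not} bound entries: the single Jordan block $\bigl(\begin{smallmatrix} 1 & 1 \\ 0 & 1 \end{smallmatrix}\bigr)$ has spectral radius $1$ yet unbounded powers, and in any case the number of nonnegative integer vectors with polynomially bounded entries is exponential in $n$. The resolution I would pursue is to stop tracking matrices or vectors and instead track the induced action on subspaces --- working projectively, or via the lattice of supports and reachable integer directions, where the joint spectral radius condition does force genuine boundedness. The delicate point is to prove that a reducing configuration must appear within a \emph{polynomial} number of steps, rather than merely eventually, and then to optimize this estimate carefully enough to extract the precise constant $\frac1{16}$ and degree $n^5$ appearing in $\ourbound$.
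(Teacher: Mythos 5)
Your reduction of the theorem to a single rank-reduction lemma is a reasonable framing, but the proposal stops exactly where the paper's work begins: the lemma is asserted, not proved, and the strategy you sketch for it cannot work as stated. You propose to find a short rank-reducing word by breadth-first search over a configuration space that is ``polynomially small'' thanks to $\rho(\M)\le 1$; but no such polynomially small space is identified, and the natural candidates (reachable row spaces, supports, subsets of states) are exponentially large. Indeed, \cref{thm-long-product} of this paper exhibits unambiguous monoids in which a specific rank-$1$ matrix is expressible only by words of exponential length, so a shortest-path search toward a fixed target configuration provably fails in general; any correct argument must carefully choose \emph{which} minimum-rank matrix to aim for, which is precisely what your search does not do.

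What is missing is the entire technical core of the paper's proof: (i) the decomposition into strongly connected components and the block-triangular rank bookkeeping that reassembles the per-component words into a global minimum-rank word; (ii) the observation (\cref{lem-sc-means-UFA}) that strong connectivity together with $\rho(\M)\le 1$ forces all products into $\{0,1\}^{n\times n}$, turning the problem into one about unambiguous automata --- this is where the spectral-radius hypothesis actually enters, not as a bound on a configuration space; (iii) the construction of the words $y,z$ eliminating coreachable and mergeable survivors (\cref{lem-extender,lem-extender-all}), which pins down the special states $q_1,\dots,q_k$; and (iv) the Sch\"utzenberger-style linear-algebra argument (\cref{lem-UFA-strong-sense}) producing length-$\le n$ words $x_i$ with $q_i z x_i y z=\emptyset$, which is the only place a genuinely short reducing word is extracted, and which works only relative to the vectors determined by $y$ and $z$. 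Your proposal correctly diagnoses that naive pigeonhole on matrix entries fails under $\rho(\M)\le 1$, but it does not supply a substitute invariant, so the rank-reduction lemma --- and with it the theorem --- remains unproved.
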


\begin{ourexample} \label{ex-intro}
Let $n = 3$ and $\M = \{A, B\}$ where
\[
A \ = \
\begin{pmatrix}
0 && 0 && 1 \\
1 && 0 && 1 \\
0 && 0 && 0
\end{pmatrix} \quad \text{and} \quad
B \ = \
\begin{pmatrix}
0 && 0 && 0 \\
0 && 1 && 0 \\
0 && 1 && 0
\end{pmatrix}\,.
\]
Then $\MM$ is finite and $\rho(\M) = 1$.
Further, the matrix product $A A A$ is the zero matrix and, hence, has rank~$0$.
No other product of length~$3$ yields the zero matrix.
\end{ourexample}

Let $\M \subseteq \N^{n \times n}$ be a finite set of nonnegative integer matrices.
For notational convenience, throughout the paper, we associate to~$\M$ a bijection $M : \Sigma \to \M$ and extend it to the monoid morphism $M : \Sigma^* \to \MM$, where $\Sigma^* = \bigcup_{i=0}^\infty \Sigma^i$ denotes the set of \emph{words} over~$\Sigma$.
For a word $w \in \Sigma^i$, its \emph{length} $|w|$ is~$i$.
We write $\varepsilon$ for the word of length~$0$.
We may write $M(\Sigma)$ for~$\M$ and $M(\Sigma^*)$ for~$\MM$ and $\rho(M)$ for $\rho(M(\Sigma))$.
Then one may rephrase the main theorem as follows:
\newcommand{\stmtthmmainrephrased}{Given $M : \Sigma \to \N^{n \times n}$ with $\rho(M) \le 1$, one can compute in polynomial time a word $w \in \Sigma^*$ with $|w| \le \ourbound$ such that $M(w)$ has minimum rank in~$M(\Sigma^*)$.
}
\renewcommand{\thmcontinues}[1]{rephrased}
\begin{ourtheorem}[continues=thm-main]
\stmtthmmainrephrased
\end{ourtheorem}
The condition $\rho(M) \le 1$ should be viewed in light of the following dichotomy~\cite{Jungers08}: if $\rho(M) \le 1$ then $B(k) := \max \{\norm{M(w)} : w \in \Sigma^k\}$ is in $O(k^n)$, i.e., $B(k)$ grows polynomially in~$k$; if $\rho(M) > 1$ then (by definition) $B(k)$ grows exponentially in~$k$.

\subsubsection*{Automata definitions}

A %set of matrices $\M \subseteq \{0,1\}^{n \times n}$
function $M : \Sigma \to \{0,1\}^{n \times n}$
is naturally associated with an \emph{automaton}.
A \emph{nondeterministic finite automaton (NFA)} is a triple $\A = (\Sigma, Q, \delta)$, where $\Sigma$ is a finite alphabet, $Q$ is a finite set of states, and $\delta: Q \times \Sigma \to 2^{Q}$ is a transition function (initial and final states do not play a role here).
We extend $\delta$ in the usual way to $\delta : 2^Q \times \Sigma^* \to 2^Q$ by setting $\delta(P,a) := \bigcup_{q \in P} \delta(q,a)$ and $\delta(P,\varepsilon) := P$ and $\delta(P,w a) := \delta(\delta(P,w),a)$, where $P \subseteq Q$ and $a \in \Sigma$ and $w \in \Sigma^*$.
A sequence $\psi = q_0 a_1 q_1 a_2 \cdots q_{n-1} a_n q_n$ with $q_i \in Q$ and $a_i \in \Sigma$ is called a \emph{path} from $q_0$ to~$q_n$ if $\delta(q_{i-1},a_i) \ni q_i$ holds for all $i \in \{1, \ldots, n\}$.
The word $a_1 \cdots a_n$ is said to \emph{label} the path~$\psi$.
Note that a word $w \in \Sigma^*$ labels a path from $p$ to~$q$ if and only if $\delta(\{p\},w) \ni q$.
A word~$w$ is called \emph{killing word} if it does not label any path.
Associate to~$\A$ the monoid morphism $M_\A : \Sigma^* \to \N^{Q \times Q}$ where for all $a \in \Sigma$ we define $M_\A(a)(p,q) = 1$ if $\delta(p,a) \ni q$ and $0$ otherwise.
Then, for any word $w \in \Sigma^*$ we have that $M_\A(w)(p,q)$ is the number of $w$-labelled paths from $p$ to~$q$.
In particular, $M_\A(w)$ is the zero matrix~$0$ if and only if $w$ is a killing word.

An NFA $\A = (\Sigma, Q, \delta)$ is called an \emph{unambiguous finite automaton (UFA)} if for all states $p,q$ all paths from $p$ to~$q$ are labelled by different words, i.e., for each word $w \in \Sigma^*$ there is at most one $w$-labelled path from $p$ to~$q$.
Call a monoid $\MM \subseteq \N^{n \times n}$ an \emph{unambiguous monoid of relations} if $\MM \subseteq \{0,1\}^{n \times n}$.
For every UFA~$\A$ the monoid $M_\A(\Sigma^*)$ is an unambiguous monoid of relations, and every unambiguous monoid of relations can be viewed as generated by a UFA.
UFAs play a central role in our proofs.

\subsubsection*{The mortality problem}

\Cref{thm-main} is related to the \emph{mortality} problem for integer matrices: given $M : \Sigma \to \Z^{n \times n}$, is $0 \in M(\Sigma^*)$, i.e., can the zero matrix (which is defined to have rank~$0$) be expressed as a finite product of matrices in~$M(\Sigma)$?
Paterson~\cite{Paterson70} showed that the mortality problem for integer matrices is undecidable for $n = 3$.
It remains undecidable for $n = 3$ with $|\Sigma| = 7$ and for $n = 21$ with $|\Sigma| = 2$, see~\cite{HalavaHH07}.
Mortality for $n = 2$ is NP-hard~\cite{Bell12Mortality} and not known to be decidable, see \cite{Potapov17} for recent work on $n=2$.

The mortality problem for \emph{nonnegative} %integer \CM{The integer hypothesis may not be necessary}
matrices (even for matrices over the nonnegative reals) is much easier, as for each matrix entry it only matters whether it is zero or nonzero, so one can assume $M : \Sigma \to \{0,1\}^{n \times n}$.
It follows that the mortality problem for nonnegative matrices is equivalent to the problem whether an NFA has a killing word.
The problem is PSPACE-complete~\cite{Shallit09}, and there are examples where the shortest killing word has exponential length in the number of states of the automaton~\cite{game82,Shallit09}.
This implies that the assumption in \cref{thm-main} about the joint spectral radius~$\rho(M)$ cannot be dropped.
Whether $\rho(M) \le 1$ indeed holds can be checked in polynomial time~\cite{Jungers08}.
The condition is satisfied whenever $M(\Sigma^*)$ is finite.
Whether $M(\Sigma^*)$ is finite can also be checked in polynomial time, see, e.g., \cite{WeberSeidlITA} and the references therein.
The authors are not aware of an easier proof of \cref{thm-main} under the stronger assumption that $M(\Sigma^*)$ is finite.
If $\rho(M) \le 1$ then the mortality problem for nonnegative integer matrices is solvable in polynomial time:
%\CM{We may be able to extend that result : it suffices that the monoid $\MM'$ is finite, where $\MM'$ is the monoid generated by the matrices of $M$ where non-zero entries are replaced by ones. In that case, we may not even need entries of matrices of $M$ to be nonnegative : if $\MM'$ is finite, that means all SCC of the NFA are unambiguous, therefore we have :
%If all SCCs have a killing word, then by reading all the killing words 1 by 1 we get a killing word for the NFA, i.e. the zero matrix. If there is an SCC without any KW, then for all words $w$ there exists $q,q'$ in that SCC s.t. $q' \in qw$. As the SCC is unambiguous, there is only 1 path labelled by $w$ from $q$ to $q'$, hence the weight of this path cannot be 'compensated' by another one, so the entry row $q$ column $q'$ in the matrix of $w$ is not zero. As a result, a product of matrices of $\MM$ is the zero matrix iff that product is the zero matrix after replacing every nonzero entry in the matrices of $M$ by ones.}
\begin{restatable}{ourproposition}{propmortality} \label{prop-mortality}
%\begin{ourproposition} \label{prop-mortality}
%Let $\M \subseteq \N^{n \times n}$ be a finite set of nonnegative integer matrices with $\rho(\M) \le 1$.
Given $M : \Sigma \to \N^{n \times n}$ with $\rho(M) \le 1$, one can decide in polynomial time if $0 \in M(\Sigma^*)$.
%\end{ourproposition}
\end{restatable}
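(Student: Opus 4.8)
The plan is to reduce the mortality question directly to \cref{thm-main}. The key observation is that, over the nonnegative integers, a matrix has rank~$0$ if and only if it is the zero matrix; consequently the minimum rank attained in $M(\Sigma^*)$ equals~$0$ precisely when $0 \in M(\Sigma^*)$. So I would first invoke \cref{thm-main} to compute, in time polynomial in the description size of the input, a word $w \in \Sigma^*$ with $|w| \le \ourbound$ such that $M(w)$ has minimum rank in~$M(\Sigma^*)$. By the observation above, $0 \in M(\Sigma^*)$ if and only if $M(w) = 0$, so it only remains to test whether the single matrix $M(w)$ is the zero matrix.

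The second step is to perform this zero-test in polynomial time. Writing $w = a_1 \cdots a_\ell$ with $\ell \le \ourbound$, I would compute the product $M(a_1) \cdots M(a_\ell)$ incrementally. Since we only need to decide whether the product vanishes, it suffices to track, for each partial product, its \emph{support} --- i.e.\ the $\{0,1\}$-matrix recording which entries are nonzero --- because for nonnegative matrices the support of a product depends only on the supports of the factors. This amounts to at most $\ell$ Boolean multiplications of $n \times n$ matrices, which runs in time polynomial in $n$ and~$\ell$, hence polynomial in the input. (Alternatively, one could multiply the integer matrices directly: the condition $\rho(M) \le 1$ yields $B(k) \in O(k^n)$, so every entry of $M(w)$ has bit size polynomial in the input and the product is again computable in polynomial time.) The algorithm answers ``yes'' exactly when the resulting matrix is the all-zero matrix.

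I do not expect any genuine obstacle beyond \cref{thm-main} itself: once a minimum-rank word of polynomial length is in hand, both the reduction and the zero-test are elementary. The only points requiring a moment's care are the equivalence ``minimum rank $= 0$ if and only if $0 \in M(\Sigma^*)$'', which rests on the convention that the zero matrix has rank~$0$, and the observation that the final zero-test can be carried out on supports (or via the polynomial bit-size bound) so that it never incurs the cost of manipulating large integers.
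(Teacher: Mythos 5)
Your reduction is internally coherent, and the paper itself remarks that \cref{prop-mortality} is implied by \cref{thm-main}; the concluding zero-test is also unproblematic (Boolean products of supports, or the polynomial bit-size bound coming from $\rho(M)\le 1$, both work). The problem is that, in this paper, the logical dependence runs the other way: \cref{prop-mortality} is an ingredient in the proof of \cref{thm-main}. Concretely, the proof of \cref{prop-sc} begins by deciding whether $0 \in M(\Sigma^*)$ --- citing \cref{prop-mortality} --- in order to choose between the killing-word construction of \cref{lem-sc-main} (whose key subroutine, \cref{lem-UFA-strong-sense}, carries the hypothesis $0 \in M(\Sigma^*)$) and the minimum-rank word $yz$ of \cref{lem-carpi-alt}. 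Invoking \cref{thm-main} as a black box to prove \cref{prop-mortality} is therefore circular as the paper is structured; this is exactly why the paper supplies a separate, self-contained proof and calls it ``easier.''

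That proof is genuinely different from yours: it sets $A := \frac{1}{|\Sigma|}\sum_{a\in\Sigma} M(a)$, shows via Perron--Frobenius that the spectral radius $\rho$ of $A$ satisfies $\rho \le 1$ (\cref{lem-rho<=1}), shows that $\rho < 1$ holds if and only if some $M(w)$ is the zero matrix (\cref{lem-spectral-radius}, using the polynomial growth bound on $\max\{\norm{M(w)} : w \in \Sigma^k\}$ that follows from $\rho(M)\le 1$), and then decides $\rho < 1$ by testing whether $Ax = x$ has a nonzero solution --- plain linear algebra, producing no witness word at all. To salvage your approach you would have to argue that \cref{thm-main} can be established without \cref{prop-mortality}, for instance by observing that the basis computation in \cref{lem-UFA-strong-sense} either succeeds for every $i$ (yielding a killing word) or fails for some $i$ (certifying $0 \notin M(\Sigma^*)$, since a killing word $w_0$ would give $e M(w_0) f = 0 \ne 1$); but that is an additional argument you would need to supply, and it is not the one the paper makes.
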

\Cref{prop-mortality} is implied by \cref{thm-main}, but has an easier proof.

\subsubsection*{Short killing words for unambiguous finite automata}
\Cref{prop-mortality} provides a polynomial-time procedure for checking
%completeness of a UFA~$\A$ whose transition graph is strongly connected (i.e., for all $p,q \in Q$ there is $w \in \Sigma^*$ with $\delta(p,w) \ni q$).
whether a UFA has a killing word.
Define $\rho$ as the spectral radius of the rational matrix $\frac{1}{|\Sigma|} \sum_{a \in \Sigma} M(a)$.
One can show that $\rho < 1$ if $\A$ has a killing word, and $\rho = 1$ otherwise (\cref{lem-spectral-radius}).
\cref{prop-mortality} then follows from the fact that one can compare $\rho$ with~$1$ in polynomial time.
Thus the spectral radius tells whether there \emph{exists} a killing word, but does not \emph{provide} a killing word.
Neither does this method imply a polynomial bound on the length of a minimal killing word, let alone a polynomial-time algorithm for computing a killing word.
\cref{thm-main}, which is proved purely combinatorially, fills this gap: if there is a killing word, then one can compute a killing word of length $O(|Q|^5)$ in polynomial time.
NP-hardness results for approximating the length of a shortest killing word were proved in~\cite{RyzhikovSzykula18}, even for the case $|\Sigma|=2$ and for \emph{partial DFAs}, which are UFAs with $|\delta(p,a)| \le 1$ for all $p \in Q$ and all $a \in \Sigma$.
In fact, by combining our main result with~\cite[Theorem~17]{RyzhikovSzykula18} the following problem is NP-complete: given an unambiguous automaton and a number $\ell \in \N$ in binary, does there exist a killing word of length at most~$\ell$?

\subsubsection*{Short minimum-rank words}

%This result relates to the problem of synchronizing automata.
Define the \emph{rank} of a UFA $\A = (\Sigma, Q, \delta)$ as the minimum rank of the matrices $M_\A(w)$ for $w \in \Sigma^*$.
A word~$w$ such that the rank of $M_\A(w)$ attains that minimum is called a \emph{minimum-rank} word.
Minimum-rank words have been very well studied for deterministic finite automata (DFAs).
DFAs are UFAs with $|\delta(p,a)| = 1$ for all $p \in Q$ and all $a \in \Sigma$.
In DFAs of rank~$1$, minimum-rank words are called \emph{synchronizing} because $\delta(Q,w)$ is a singleton when $w$ is a minimum-rank word.
It is the famous {\v{C}}ern{\'y} conjecture that whenever a DFA has a synchronizing word then it has a synchronizing word of length at most $(n-1)^2$ where $n := |Q|$.
There are DFAs whose shortest synchronizing words have that length, but the best known upper bound is cubic in~$n$, see~\cite{Volkov08} for a survey on the {\v{C}}ern{\'y} conjecture.

In 1986 Berstel and Perrin generalized the {\v{C}}ern{\'y} conjecture from DFAs to UFAs by conjecturing~\cite{BerstelPerrin86} that in any UFA a shortest minimum-rank word has length $O(n^2)$.
They remarked that no polynomial upper bound was known.
Then Carpi~\cite{Carpi88} showed the following:%
\begin{ourtheorem}[Carpi~\cite{Carpi88}] \label{thm-Carpi}
Let $\A = (\Sigma, Q, \delta)$ be a UFA of rank $r \ge 1$ such that the state transition graph of~$\A$ is strongly connected.
Let $n := |Q| \ge 1$.
Then $\A$ has a minimum-rank word of length at most $\frac12 r n (n-1)^2 + (2 r - 1) (n-1)$.
\end{ourtheorem}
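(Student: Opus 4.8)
The plan is to construct a minimum-rank word incrementally, appending short factors that make steady progress toward the minimal column space, and to bound the total length by a potential argument. Throughout I work with the matrices $M(w) := M_\A(w) \in \{0,1\}^{Q \times Q}$ and take ranks over~$\Q$, writing $C(w)$ for the column space of $M(w)$, so that $\rk(M(w)) = \dim C(w)$. The basic monotonicity I would exploit is that $C(wv) \subseteq C(w)$ for every $v \in \Sigma^*$, since each column of $M(w)M(v)$ is a $\Q$-linear combination of columns of $M(w)$; hence $\rk(M(wv)) \le \rk(M(w))$, ranks are non-increasing along right extensions, and $r$ is the least value attainable. Because $M(\varepsilon)$ is the identity of rank~$n$, it suffices to repeatedly append factors that lower the rank from $n$ down to~$r$, provided each such factor can be kept short.

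The whole problem thus reduces to a \emph{rank-decrease lemma}: given $w$ with $\rk(M(w)) = k > r$, produce a factor $v$ with $\rk(M(wv)) < k$ whose length is controlled. Since $k>r$ is not the minimum, some rank-decreasing $v$ exists; the entire difficulty is bounding $|v|$, and this is exactly where the two hypotheses enter. Strong connectivity guarantees that any state can be steered to any other by a word of length at most $n-1$, so ``connector'' words are cheap. Unambiguity --- the fact that $M(w) \in \{0,1\}^{Q \times Q}$ with at most one path between any ordered pair of states --- is what rules out the pathology of general NFAs (whose shortest killing words can be exponential) and keeps the linear-algebraic bookkeeping polynomial. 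Concretely, above the minimum rank the current columns carry a nontrivial dependency that is not yet ``activated''; I would show that unambiguity lets one identify a pair of states whose co-reachable indicator columns can be forced to collapse, and strong connectivity then supplies a steering word of length $\le n-1$ that realizes (part of) this collapse.

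To convert single decreases into the global bound I would set up a potential $\Phi(w) \in \N$ that (i) is bounded above by roughly $r \binom{n}{2} \approx \frac12 r n (n-1)$, and (ii) strictly decreases whenever the appropriate connector of length $\le n-1$ is appended. The ceiling $r\binom n2$ should reflect that progress is measured relative to an $r$-dimensional target column space across the $\binom n2$ pairs of states on which strong connectivity lets us act, while each appended connector lowers $\Phi$ by at least one even in steps where the rank itself is momentarily unchanged. Summing connector lengths then gives $|w| \le (n-1)\, r\binom n2 = \frac12 r n (n-1)^2$ at leading order, and a separate accounting of the initial steering into a good configuration and of the first and last phases contributes the additive $(2r-1)(n-1)$. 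Pinning the exact constants amounts to counting precisely how many connectors each rank drop consumes and how $\Phi$ behaves at the extreme steps.

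The main obstacle is the rank-decrease lemma coupled with the design of $\Phi$: proving that from any non-minimal $w$ one can append a connector of length $\le n-1$ that \emph{strictly} lowers $\Phi$ (rather than merely moving within the same column space), and that $\Phi$ obeys the claimed ceiling. This is where unambiguity is indispensable --- the $\{0,1\}$ structure together with the at-most-one-path property is precisely what lets a latent dependency among the current columns be activated by a \emph{short} strongly-connected route, something impossible for general NFAs. I expect the delicate part to be making this activation effective, i.e.\ exhibiting a short connector that genuinely advances the collapse, and bounding the number of connectors per rank drop by $O(rn)$; by comparison, the column-space monotonicity and the diameter bound from strong connectivity are routine.
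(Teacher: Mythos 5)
Your proposal is a plan rather than a proof: the two components on which everything rests --- the rank-decrease lemma and the potential $\Phi$ --- are never constructed, and you say so yourself (``the main obstacle is the rank-decrease lemma coupled with the design of $\Phi$''). What you actually establish is only the routine part: column spaces are non-increasing under right multiplication, and strong connectivity gives connectors of length at most $n-1$. The substance of the theorem is precisely the existence of an integer-valued quantity, bounded by roughly $r\binom{n}{2}$, that is forced to drop by a short factor whenever the current rank exceeds $r$; asserting that such a $\Phi$ ``should reflect progress relative to an $r$-dimensional target column space across the $\binom{n}{2}$ pairs'' does not define it, and nothing in your sketch explains why a single connector of length at most $n-1$ (rather than, say, a coreachability witness of length $\Theta(n^2)$) suffices to make strict progress. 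There is also a structural mismatch you would need to resolve: your top-level strategy is to drive the rank from $n$ down to $r$, which suggests a cost scaling with $n-r$, whereas the claimed bound $\frac12 r n (n-1)^2 + (2r-1)(n-1)$ \emph{grows} with $r$; reconciling this is exactly where the unproven potential would have to do all the work.

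For comparison: the paper does not reprove Carpi's theorem (it is cited from~\cite{Carpi88}); instead it proves a variant (\cref{lem-carpi-alt}) sufficient for its purposes by an entirely different, non-iterative route. One constructs two explicit words $y$ and $z$ (\cref{lem-extender-all}) such that no two coreachable states both survive $z$ and no two mergeable states are both reached by $y$; then $M(yz)$ is shown to have rank equal to the number $k$ of states reached by $y$ that survive $z$, and minimality of this rank is proved by contradiction (a word of smaller rank would yield a killing word, contradicting $0 \notin M(\Sigma^*)$). If you want a self-contained argument, that construction --- where unambiguity is used to show that the set $qw$ strictly grows each time a coreachability witness is prepended, which bounds the number of iterations --- is the concrete mechanism your sketch is missing.
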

This implies an $O(n^4)$ bound for the case where $r \ge 1$.
Carpi left open the case $r=0$, i.e., when a killing word exists.
The main technical contribution of our paper concerns the case $r=0$.
Combined with Carpi's \cref{thm-Carpi} we then obtain \cref{thm-main}.
Based on our technical development, we also provide a short proof of a variant of Carpi's \cref{thm-Carpi}, which suffices for our purposes and makes this article self-contained.
\cref{thm-main} provides, to the best of the authors' knowledge, the first polynomial bound, $O(n^5)$, on the length of  shortest minimum-rank words for UFAs.

\subsubsection*{Restivo's conjecture}
Let $X \subseteq \Sigma^*$ be a finite set of words over a finite alphabet~$\Sigma$, and define $k := \max_{x \in X} |x|$.
A word $v \in \Sigma^*$ is called \emph{uncompletable} in~$X$ if there are no words $u, w \in \Sigma^*$ such that $u v w \in X^*$, i.e., $v$ is not a factor of any word in $X^*$.
In~1981 Restivo~\cite{Restivo81} conjectured that if there exists an uncompletable word then there is an uncompletable word of length at most $2 k^2$.
This strong form of Restivo's conjecture was refuted in~\cite{Gusev11}, with a lower bound of $5 k^2 - O(k)$.
See also~\cite{Carpi17} for more recent work and open problems related to Restivo's conjecture.
The article~\cite{Julia17} describes a sophisticated computer-assisted search for sets~$X$ with long shortest uncompletable words.
While these experiments did not formally disprove a quadratic upper bound in~$k$, they seemed to hint at an exponential behaviour in~$k$.
Indeed, in a recent preprint~\cite{MikaSzykula19-arxiv} a lower bound of $2^{k/4}\cdot k/4$ was given, refuting Restivo's conjecture fundamentally.
The article~\cite{MikaSzykula19-arxiv} also provides a lower bound of $2^{\Omega(m^{1/5})}$, where $m := \sum_{x \in X} |x|$.

A set $X \subseteq \Sigma^*$ is called a \emph{code} if every word $w \in X^*$ has at most one decomposition $w = x_1 \cdots x_\ell$ with $x_1, \ldots, x_\ell \in X$.
See~\cite{BerstelCodesAutomata} for a comprehensive reference on codes.
For a finite code $X \subseteq \Sigma^*$ define $m := \sum_{x \in X} |x|$.
Given such~$X$ one can construct a \emph{flower automaton} \cite[Chapter 4.2]{BerstelCodesAutomata}, which is a UFA $\A_X = (\Sigma, Q, \delta)$ with $m - |X| + 1$ states, see \cref{fig-flower}.
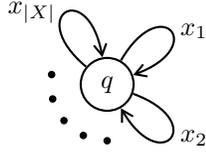
\begin{figure}
\begin{center}
\begin{tikzpicture}[xscale=3,yscale=3,AUT style]
\node[state] (q) at (0,0) {$q$};
\path[->] (q) edge [loop,out=140,in=100,looseness=11] node[left] {$x_{|X|}$} (q);
\path[->] (q) edge [loop,out=60,in=20,looseness=11] node[right] {$x_1$} (q);
\path[->] (q) edge [loop,out=-20,in=-60,looseness=11] node[right] {$x_2$} (q);
\foreach \i in {-25,0,25,-50,50}{
\node[circle,fill,inner sep=1.0] at (220+\i:0.25) {};
}
\end{tikzpicture}
\end{center}
\caption{Given a finite language $X \subseteq \Sigma^*$, the flower automaton $\A_X$ has one ``petal'' for each word $x \in X$. Thus $\delta(q, w) \ni q$ holds if and only if $w \in X^*$.
If $X$ is a code then $\A_X$ is unambiguous.}
\label{fig-flower}
\end{figure}
In this UFA any word is killing if and only if it is uncompletable in~$X$.
Hence \cref{thm-main} implies an $O(m^5)$  bound on the length of the shortest uncompletable word in a finite code.
This proves a weak (note that $m^5$~may be much larger than $k^2$) version of Restivo's conjecture for finite codes.
By adapting our main argument so that it exploits the special structure of flower automata, we get a better result:
\begin{restatable}{ourtheorem}{thmrestivo} \label{thm-restivo}
Let $X \subseteq \Sigma^*$ be a finite code that has an uncompletable word.
Define $k := \max_{x \in X} |x|$ and $m := \sum_{x \in X} |x|$ and assume $k > 0$.
Then one can compute in polynomial time an uncompletable word of length at most $(k+1) k^2 (m+2) (m+1)$.
\end{restatable}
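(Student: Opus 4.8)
The plan is to construct the flower automaton $\A_X = (\Sigma, Q, \delta)$ as in \cref{fig-flower}, so that $|Q| = m - |X| + 1 \le m$ and, since $X$ is a code, $\A_X$ is unambiguous. By the discussion accompanying \cref{fig-flower}, a word is uncompletable in~$X$ exactly when it is a killing word of~$\A_X$, i.e.\ when $M_{\A_X}(w) = 0$, equivalently $\delta(Q,w) = \emptyset$. Thus the hypothesis that an uncompletable word exists says precisely that $\A_X$ has minimum rank~$0$, and the task reduces to producing a short killing word. Feeding $\A_X$ directly into \cref{thm-main} already yields an $O(|Q|^5) = O(m^5)$ bound, so the actual content is to sharpen this to $(k+1)k^2(m+2)(m+1) = O(k^3 m^2)$ by reworking the rank-$0$ argument behind \cref{thm-main} so that it exploits the special shape of~$\A_X$.

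The key structural feature I would isolate is that all nondeterminism of~$\A_X$ is concentrated at the hub~$q$: every petal is a deterministic chain of at most $k-1$ internal states, and branching occurs only when entering or leaving~$q$. Consequently the reachable set $R = \delta(Q,w)$ admits a compact description. A state at depth~$a$ inside the petal for~$x$ lies in~$R$ if and only if the last $a$ letters of~$w$ spell the length-$a$ prefix of~$x$ and the hub was reachable $a$ steps earlier; so each ``open thread'' advances deterministically and is extinguished within at most~$k$ further letters unless it loops back through~$q$. This is exactly the phenomenon that lets distances which cost a factor of $|Q| \approx m$ in the proof of \cref{thm-main} be charged instead to the petal length~$k$.

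Concretely, I would set up a potential-and-progress scheme mirroring the rank-$0$ case of \cref{thm-main}. First I define a monotone potential $\Phi(R)$ on reachable subsets as a pair of quantities each of order~$m$ (for instance a count of active threads together with a depth measure), ordered lexicographically, so that $\Phi$ takes at most $(m+2)(m+1)$ values and $\Phi(R)=0$ forces $R=\emptyset$; the quadratic count reflects that two parameters of order~$m$, rather than one, carry the relevant information. Then I prove the progress lemma: from any nonempty reachable~$R$ one can, using the hub to reset threads and the determinism of the petals to run each thread to extinction, append a word of length at most $(k+1)k^2$ that strictly decreases~$\Phi$. Concatenating the at most $(m+2)(m+1)$ progress words yields a killing word of length at most $(k+1)k^2(m+2)(m+1)$, and since each progress word is found by a polynomial-time search over the $O(k)$ petal continuations and reset choices, the whole construction runs in polynomial time.

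The main obstacle I expect is the progress lemma, specifically the sharp $O(k^3)$ bound on the length of a single progress word. This is the point where the combinatorial core of \cref{thm-main} --- presumably an unambiguity-based counting argument showing that an oversized reachable set can be shrunk --- must be re-derived in a form that charges its cost to petal lengths (yielding $k$-factors) rather than to the global state count (which would yield $m$-factors). Collapsing the potential to the two-parameter, $O(m^2)$-valued quantity while keeping each repair local to a bounded number of petals, and verifying that the hub genuinely permits the thread resets the argument needs, is where the flower-specific reasoning has to be done carefully; the surrounding bookkeeping (the washout of the initial all-states seeding within the first~$k$ letters, and the base cases for short~$w$) is routine by comparison.
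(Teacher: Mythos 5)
Your reduction to the flower automaton and the observation that petal-determinism should convert $n$-factors into $k$-factors are both the right starting points, and they match the paper. But the core of your argument --- the two-parameter potential $\Phi$ with $(m+2)(m+1)$ values and the progress lemma producing length-$(k+1)k^2$ words that strictly decrease it --- is left entirely unproved, and I see no reason to believe it in the form you state it. The shape of your decomposition (about $m^2$ progress steps, each of length $O(k^3)$) is in fact the reverse of how the bound actually factors: in the paper the killing word is $yz\,x_{i_1}yz\,x_{i_2}yz\cdots$ with at most $k+1$ blocks, each block of length $O(k^2 n^2)$, where $y,z$ are separating words and the $x_i$ are length-$\le n$ words obtained from the linear-algebra argument of \cref{lem-UFA-strong-sense}. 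Your sketch never engages with the real difficulty, which is making the hub unreachable: ``running each thread to extinction'' inside a petal is easy, but the hub keeps respawning threads into every petal, and killing the hub requires the global Sch\"utzenberger-style construction of the words $x_i$ (or something equivalent). Without a concrete definition of $\Phi$ and a proof that a word of length $O(k^3)$ always decreases it, the plan is essentially a restatement of the goal.

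The lever the paper actually uses is also different from your thread-counting idea, and worth knowing: fix for each state $q$ the word $u_q$ with $|u_q|\le k-1$ that runs $q$ to the hub. If two coreachable states both survive a word of length at least $k-1$ and their exit words have equal length, then those exit words are equal prefixes of that word, so the two states are mergeable as well as coreachable, contradicting unambiguity (\cref{lem-restivo-basic-lemma}). Since $|u_q|$ takes only $k$ possible values, this caps at $k$ (rather than at $n$) the iteration counts in the analogues of \cref{lem-extender,lem-extender-all} and the number of pivot states $q_1,\dots,q_k$ in \cref{lem-sc-main}; re-running the machinery of \cref{sub-sc} with these improved counts gives the bound $(k+1)k^2(n+2)(n+1)$ directly, and $n\le m$ finishes the proof. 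To salvage your write-up you would need either to supply this argument or to actually construct your potential and prove your progress lemma; as it stands the proposal identifies the target but not the proof.
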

This result does not contradict the work~\cite{MikaSzykula19-arxiv}, as their sets~$X$ are not codes.
Contrasting the results of~\cite{MikaSzykula19-arxiv} with our \cref{thm-restivo}, we highlight as open problem the following version of Restivo's conjecture for finite codes: does every finite code with an uncompletable word have an uncompletable word of length polynomial in~$k$?

\subsubsection*{Is any product a short product?\nopunct}
It was shown in~\cite{WeberSeidlITA} that if $M(\Sigma^*) \subseteq \N^{n \times n}$ is finite then for every $w_0 \in \Sigma^*$ there exists $w \in \Sigma^*$ with $|w| \le \left\lceil e^2 n!\right\rceil-2$ such that $M(w_0) = M(w)$.
It was also shown in~\cite{WeberSeidlITA} that such a length bound cannot be smaller than $2^{n-2}$.
In view of \cref{thm-main} one may ask if a polynomial length bound exists for \emph{low-rank} matrices~$M(w_0)$.
The answer is no, even for unambiguous monoids of relations and even when $M(w_0)$ has rank~$1$ and $1$ is the minimum rank in~$M(\Sigma^*)$:
\begin{restatable}{ourtheorem}{thmlongproduct} \label{thm-long-product}
%There is no polynomial~$p$ such that the following holds:
%\begin{quote}
%%There is a sequence $\M_1, \M_2, \ldots$ such that for all $i \ge 1$ we have that $\M_i \subseteq \{0,1\}^{n_i \times n_i}$ generates an unambiguous monoid of relations $\overline{\M_i} \subseteq \{0,1\}^{n_i \times n_i}$ and there is $M_i \in \overline{\M_i}$ such that
%Let $n \in \N$, let $\M \subseteq \{0,1\}^{n \times n}$ generate an unambiguous monoid of relations $\MM \subseteq \{0,1\}^{n \times n}$.
%Let $M \in \MM$ have rank~$1$, and let $1$ be the minimum rank in~$\MM$.
%Then there are $M_1, \ldots, M_\ell \in \M$ with $\ell \le p(n)$ such that $M = M_1 \cdots M_\ell$.
%\end{quote}
There is no polynomial~$p$ such that the following holds:
\begin{quote}
Let $M : \Sigma^* \to \{0,1\}^{n \times n}$ be a monoid morphism.
Let $w_0 \in \Sigma^*$ be such that $M(w_0)$ has rank~$1$, and let $1$ be the minimum rank in $M(\Sigma^*)$.
Then there is $w \in \Sigma^*$ with $|w| \le p(n)$ such that $M(w_0) = M(w)$.
\end{quote}
\end{restatable}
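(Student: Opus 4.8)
The plan is to refute the statement via the Chinese Remainder Theorem: for every prospective polynomial I will exhibit a morphism whose dimension $n$ is small but for which the shortest word reaching a designated rank-$1$ matrix is super-polynomially long. The mechanism is to encode a large period $P = \prod_{i=1}^{k} p_i$ (the product of the first $k$ primes) using only $n = 1 + \sum_{i=1}^{k} p_i$ states, exploiting that $\prod p_i$ grows faster than any polynomial in $\sum p_i$: since $P \ge 2^k$ while $n = O(k^2 \log k)$, the quantity $P$ exceeds any fixed polynomial in $n$ once $k$ is large enough.

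Concretely, I would take $\Sigma = \{a,b\}$ and build a deterministic partial automaton (so each generator, and hence every $M(w)$, is a partial-function matrix with entries in $\{0,1\}$, and the automaton is a UFA). The states are a sink $t$ together with $k$ disjoint cycles $C_i = \{c_{i,0}, \ldots, c_{i,p_i-1}\}$. Letter $a$ rotates each cycle, $c_{i,j} \mapsto c_{i,(j+1)\bmod p_i}$, and fixes~$t$; letter $b$ maps each home state $c_{i,0}$ to~$t$, leaves every other cycle state undefined, and fixes~$t$. Thus every state has at most one outgoing transition per letter.

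The first step is to classify all matrices $M(w)$. If $w \in \{a\}^*$ then $M(w)$ is a permutation of rank $n$. If $w$ contains at least one~$b$, then after its first~$b$ every surviving source is sent to~$t$, and since $t$ is fixed by both letters the remainder of~$w$ is inert. Writing $a^\kappa$ for the prefix of~$w$ before its first~$b$, I would show that $M(w)$ is the rank-$1$ matrix whose live source states are $\{t\}\cup\{c_{i,(-\kappa)\bmod p_i} : 1\le i\le k\}$, every such row being the indicator of column~$t$. Because $t$ always survives, the rank is never~$0$, so the minimum rank in $M(\Sigma^*)$ is exactly~$1$, and each nonpermutation matrix has rank~$1$. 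From this classification the bound is immediate: take $w_0 = a^{P-1} b$, whose live sources are $\{t\}\cup\{c_{i,1}\}$. Any $w$ with $M(w)=M(w_0)$ must contain a~$b$ with $a$-prefix length $\kappa \equiv -1 \pmod{p_i}$ for every~$i$; by the Chinese Remainder Theorem this forces $\kappa \equiv -1 \pmod P$, hence $\kappa \ge P-1$ and $|w| \ge P$. So the shortest word representing $M(w_0)$ has length exactly $P = \prod_{i=1}^{k} p_i$, which is super-polynomial in~$n$.

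The only real subtlety—and the hard part to get right—is the classification step: I must argue that no word can shortcut to $M(w_0)$, i.e.\ that $M(w)$ depends solely on the residue (modulo each $p_i$) of the length of the prefix preceding the first~$b$. This hinges on the fixed-point role of~$t$, which renders everything after the first~$b$ irrelevant and simultaneously guarantees that rank~$0$ never occurs. Everything else is the standard CRT packing argument, together with the routine prime-number estimate $\prod_{i\le k} p_i = n^{\omega(1)}$.
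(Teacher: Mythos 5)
Your proposal is correct and uses essentially the same idea as the paper: pack the first $k$ primes into disjoint cycles so that $n = 1 + \sum_{i\le k} p_i = O(k^2\log k)$ while the relevant period $P=\prod_{i\le k}p_i\ge 2^k$ is super-polynomial in $n$, then force any word representing the target rank-$1$ matrix to synchronize all residues via the Chinese Remainder Theorem. The only (correct and in fact slightly stronger) deviation is cosmetic: you realize the construction as a partial DFA over a binary alphabet with a sink $t$, encoding the residues in which rows survive, whereas the paper uses a nondeterministic automaton with a central state, letters $b_1,\ldots,b_m$, and encodes the residues in which states are reached from that central state.
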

Thus, while \cref{thm-main} guarantees that \emph{some} minimum-rank matrix in the monoid is a short product, this is not the case for every minimum-rank matrix in the monoid.

\subsubsection*{By how much could the $O(n^5)$ upper bound be improved?\nopunct}
A \emph{synchronizing $0$-automaton} is a DFA $\A = (\Sigma, Q, \delta)$ that has a state $0 \in Q$ and a word $w \in \Sigma^*$ such that $\delta(Q,w x) = \{0\}$ holds for all $x \in \Sigma^*$.
The shortest such synchronizing words~$w$ are exactly the shortest killing words in the partial DFA obtained from~$\A$ by omitting all transitions into the state~$0$.
There exist synchronizing $0$-automata with $n$~states where the shortest synchronizing word has length $n(n-1)/2$, and $\frac{n^2}{4} + \Omega(n)$ lower bounds exist even for synchronizing $0$-automata with $|\Sigma| = 2$ \cite{Martugin08,JALC-2019-153}.
This implies that the $O(n^5)$ upper bound from \cref{thm-main} cannot be improved to $o(n^2)$, not even when a killing word exists.
One might generalize the {\v{C}}ern{\'y} conjecture by claiming \cref{thm-main} with an upper bound of $(n-1)^2$ (note that such a conjecture would concern minimum-rank words, not minimum nonzero-rank words).
To the best of the authors' knowledge, this vast generalization of the {\v{C}}ern{\'y} conjecture %that would claim \cref{thm-main} with an upper bound of $(n-1)^2$
has not yet been refuted.

\subsubsection*{Organization of the article}
In the remaining four sections we prove \cref{prop-mortality,thm-main,thm-restivo,thm-long-product}, respectively.

\section{Proof of \texorpdfstring{\cref{prop-mortality}}{Proposition~\ref{prop-mortality}}} \label{sec-proof-mortality}

%Let $\M \subseteq \N^{n \times n}$ be a finite set of nonnegative integer matrices with $\rho(\M) \le 1$.
Let $M : \Sigma \to \N^{n \times n}$ be such that $\rho(M) \le 1$.

Towards a proof of \cref{prop-mortality}, define the rational nonnegative matrix $A \in \Q^{n \times n}$ by $A := \frac{1}{|\Sigma|} \sum_{a \in \Sigma} M(a)$.
Observe that for $k \in \N$ we have $A^k = \frac{1}{|\Sigma^k|} \sum_{w \in \Sigma^k} M(w)$, i.e., $A^k$ is the average of the $M(w)$, where $w$ ranges over all words of length~$k$.
Define $\rho \ge 0$ as the spectral radius of~$A$.
\begin{ourlemma} \label{lem-rho<=1}
We have $\rho \le 1$.
\end{ourlemma}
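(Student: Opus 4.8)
The plan is to relate the ordinary spectral radius $\rho$ of the averaged matrix~$A$ to the joint spectral radius $\rho(M)$ by comparing the powers~$A^k$ with the longest products of length~$k$. The starting point is the identity already recorded above, namely $A^k = \frac{1}{|\Sigma|^k} \sum_{w \in \Sigma^k} M(w)$, which follows by expanding the product $A^k = \bigl(\frac{1}{|\Sigma|}\sum_{a \in \Sigma} M(a)\bigr)^k$ and using the morphism property $M(w) = M(a_1) \cdots M(a_k)$ for $w = a_1 \cdots a_k$. This exhibits $A^k$ as the average of the matrices $M(w)$ over all words~$w$ of length~$k$.

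First I would bound the norm of~$A^k$. By the triangle inequality applied to the averaging identity, $\norm{A^k} \le \frac{1}{|\Sigma|^k} \sum_{w \in \Sigma^k} \norm{M(w)} \le \max_{w \in \Sigma^k} \norm{M(w)} = B(k)$, where $B(k)$ is the quantity from the introduction. Any norm may be used here, since all norms on $\R^{n \times n}$ are equivalent and the joint spectral radius does not depend on the chosen norm.

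Next I would pass from the norm back to the spectral radius. Since the eigenvalues of~$A^k$ are exactly the $k$-th powers of the eigenvalues of~$A$, we have $\rho^k = \rho(A^k)$, and the spectral radius of any matrix is dominated by any submultiplicative norm of it; hence $\rho^k = \rho(A^k) \le \norm{A^k} \le B(k)$. Taking $k$-th roots gives $\rho \le B(k)^{1/k}$ for every~$k$. Finally, since $x \mapsto x^{1/k}$ is increasing we have $\max_{w \in \Sigma^k} \norm{M(w)}^{1/k} = B(k)^{1/k}$, so the defining limit of the joint spectral radius reads $\lim_{k \to \infty} B(k)^{1/k} = \rho(M)$; letting $k \to \infty$ in $\rho \le B(k)^{1/k}$ then yields $\rho \le \rho(M) \le 1$, as desired.

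I do not expect a genuine obstacle here: the argument is a short chain of standard inequalities, and the only point to keep straight is that averaging over all length-$k$ words converts the multiplicativity of~$M$ into the clean power identity for~$A$. It is worth noting that nonnegativity of the matrices plays no role in this particular lemma—both the bound $\norm{A^k} \le B(k)$ and the comparison of~$\rho$ with the norm hold for arbitrary real matrices—so the entire content is that forming the average cannot push the spectral radius above the joint spectral radius.
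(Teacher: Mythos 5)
Your proof is correct and follows essentially the same route as the paper: both compare $\rho^k$ with $B(k)$ via the averaging identity for $A^k$ and then let $k \to \infty$ in the definition of the joint spectral radius. The only difference is cosmetic: where you invoke the standard bound $\rho(A^k) \le \norm{A^k}$ for a submultiplicative norm, the paper instead produces a nonnegative eigenvector via Perron--Frobenius and tracks $A^k u = \rho^k u$, so your version avoids Perron--Frobenius but proves the same inequality $\rho \le \rho(M) \le 1$.
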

\begin{proof}
%Since $M(\Sigma^*)$ is finite, it is bounded.
%Hence $(A^k)_{k \in \N}$ is bounded.
By the Perron-Frobenius theorem, $A$ has a nonnegative eigenvector $u \in \R^{n}$ with $A u = \rho u$.
So $A^k u = \rho^k u$.
Thus $\max \{\norm{M(w)} : w \in \Sigma^k\} \in \Omega(\rho^k)$.
Hence $\rho \le \rho(M) \le 1$.
\end{proof}
\begin{ourlemma} \label{lem-spectral-radius}
We have $\rho < 1$ if and only if there is $w \in \Sigma^*$ with $M(w) = 0$.
\end{ourlemma}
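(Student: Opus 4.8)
The plan rests on a single observation connecting the two sides of the equivalence. Writing $A^k = \frac{1}{|\Sigma|^k}\sum_{w \in \Sigma^k} M(w)$, the $(p,q)$-entry of $A^k$ is precisely the average of $M(w)(p,q)$ over all words $w$ of length~$k$. Since $A$ is a nonnegative matrix with spectral radius~$\rho$, the standard fact that $A^k \to 0$ (entrywise) if and only if $\rho < 1$ lets me translate the statement into claims about these averages. I would prove the two implications separately.

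For the direction ``$\rho < 1 \Rightarrow$ some $w$ has $M(w)=0$'', I would apply $A^k \to 0$ to the all-ones vector $\mathbf 1 = (1,\ldots,1)^T$. The scalar $\mathbf 1^T A^k \mathbf 1 = \frac{1}{|\Sigma|^k}\sum_{w\in\Sigma^k} S(w)$, where $S(w) := \mathbf 1^T M(w)\mathbf 1$ is the sum of all entries of $M(w)$, then tends to~$0$. As each $S(w)$ is a nonnegative integer and their average over $\Sigma^k$ falls below~$1$ for $k$ large enough, some word $w$ of that length must satisfy $S(w) = 0$; and since $M(w)$ is a nonnegative integer matrix, $S(w)=0$ is equivalent to $M(w)=0$.

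The converse ``$M(w_0) = 0 \Rightarrow \rho < 1$'' is where the work lies, and it is the step I expect to be the main obstacle. Let $d := |w_0| \ge 1$. Because $M$ is a monoid morphism, $M(u w_0 v) = M(u) M(w_0) M(v) = 0$ for all $u,v$, so $M(w) = 0$ whenever $w$ contains $w_0$ as a factor. The only words contributing to $A^k$ are therefore those avoiding~$w_0$, and a routine block-counting argument (partition the $k$ positions into $\lfloor k/d\rfloor$ disjoint blocks of length~$d$, each forced to differ from~$w_0$) bounds the fraction of such words by $(1 - |\Sigma|^{-d})^{\lfloor k/d\rfloor}$, which decays exponentially in~$k$. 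The subtle point is that an exponentially small fraction of words is not by itself enough, since the surviving entries $M(w)(p,q)$ could in principle be huge. This is exactly where the hypothesis $\rho(M)\le 1$ enters: by the polynomial-growth dichotomy recalled in the introduction, $B(k) = \max\{\norm{M(w)} : w\in\Sigma^k\}$ is $O(k^n)$, so every entry of every $M(w)$ with $|w|=k$ is bounded polynomially in~$k$. Combining the two bounds, each entry of $A^k$ is at most $(1-|\Sigma|^{-d})^{\lfloor k/d\rfloor} \cdot O(k^n)$, which tends to~$0$; hence $A^k \to 0$ and $\rho < 1$.

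I would close by stressing that this interplay is essential rather than an artifact of the argument: the exponential decay of the avoiding-word fraction must dominate the growth of the matrix entries, and only the joint-spectral-radius assumption guarantees that this growth is merely polynomial. Dropping $\rho(M)\le 1$ would permit exponential entry growth and the estimate would break down, consistent with the earlier remark that the hypothesis cannot be removed.
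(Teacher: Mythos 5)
Your proof is correct and follows essentially the same route as the paper's: the forward direction drives the average entry-sum of $A^k$ below $1$ to force an integer matrix $M(w)$ to vanish, and the converse combines the exponential decay of the fraction of $w_0$-avoiding words (via the same disjoint-block count) with the polynomial bound on $\norm{M(w)}$ guaranteed by $\rho(M)\le 1$. You have also correctly identified the role of the joint-spectral-radius hypothesis, which is exactly where the paper invokes it.
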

\begin{proof}
Suppose $\rho < 1$.
Then $\lim_{k \to \infty} A^k = 0$, and so there is $k \in \N$ such that the sum of all entries of $A^k$ is less than~$1$.
It follows that there is $w \in \Sigma^k$ such that the sum of all entries of~$M(w)$ is less than~$1$.
Since $M(w) \in \N^{n \times n}$ it follows $M(w) = 0$.

Conversely, suppose there is $w_0 \in \Sigma^*$ with $M(w_0) = 0$.
%Since $M(\Sigma^*)$ is finite, there is $B \in \N$ such that all entries of all matrices in $M(\Sigma^*)$ are at most~$B$.
Since $\rho(M) \le 1$, by~\cite[Theorem~3]{Jungers08} there exists $c > 0$ such that $B(k) := \max \{\norm{M(w)} : w \in \Sigma^k\} \le c k^n$ holds for all $k \in \N \setminus \{0\}$.
For any $k \in \N$ define
$
 W(k) := \Sigma^k \setminus \left(\Sigma^* w_0 \Sigma^*\right)
$,
i.e., $W(k)$ is the set of length-$k$ words that do not contain $w_0$ as a factor.
Note that $M(w) = 0$ holds for all $w \in \Sigma^k \setminus W(k)$.
Since matrix norms are sub-additive, it follows that $\norm{A^k}$ is at most $\frac{|W(k)|}{|\Sigma^k|} \cdot B(k)$.
On the other hand, for any $m \in \N$, if a word of length $m |w_0|$ is picked uniformly at random, then the probability of picking a word in $W(m |w_0|)$ is at most
\[
 \left(1 - \frac{1}{|\Sigma^{|w_0|}|}\right)^m\,,
\]
thus
\[
 \norm{A^{m |w_0|}} \ \le \ \left(1 - \frac{1}{|\Sigma^{|w_0|}|}\right)^m c (m |w_0|)^n\,.
\]
Hence $\lim_{k \to \infty} A^k = 0$ and so $\rho < 1$.
\end{proof}

With these lemmas at hand, we can prove \cref{prop-mortality}:
\propmortality*
\begin{proof}
By \cref{lem-spectral-radius}, it suffices to check whether $\rho < 1$.

If $\rho < 1$ then the linear system $A x = x$ does not have a nonzero solution.
Conversely, if $\rho \ge 1$ then, by \cref{lem-rho<=1}, we have $\rho = 1$ and thus, by the Perron-Frobenius theorem, the linear system $A x = x$ has a (real) nonzero solution.

Hence it suffices to check if $A x = x$ has a nonzero solution.
This can be done in polynomial time.
\end{proof}

As remarked in \Autoref{sec-intro}, this algorithm does not exhibit a word~$w$ with $M(w) = 0$, even when it proves the existence of such~$w$.

\section{Proof of \texorpdfstring{\cref{thm-main}}{Theorem~\ref{thm-main}}} \label{sec-proof-main}

As before, let $M : \Sigma \to \N^{n \times n}$ be such that $\rho(M) \le 1$.
Call $M$ \emph{strongly connected} if for all $i,j \in \{1, \ldots, n\}$ there is $w \in \Sigma^*$ with $M(w)(i,j) \ge 1$.
In \cref{sub-sc} we consider the case that $M$ is strongly connected.
In \cref{sub-nsc} we consider the general case.

\subsection{Strongly Connected} \label{sub-sc}

In this section we consider the case that $M$ is strongly connected and prove the following proposition, which extends Carpi's \cref{thm-Carpi}:
\begin{restatable}{ourproposition}{propsc} \label{prop-sc}
%Let $M : \Sigma^* \to \N^{n \times n}$ be strongly connected with .
Given $M : \Sigma \to \N^{n \times n}$ such that $\rho(M) \le 1$ and $M$ is strongly connected, one can compute in polynomial time a word $w \in \Sigma^*$ with $|w| \le \ourbound$ such that $M(w)$ has minimum rank in~$M(\Sigma^*)$.
\end{restatable}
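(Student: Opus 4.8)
The plan is to reduce to the case of an unambiguous automaton and then drive the rank down one unit at a time by short extensions. \emph{Reduction to a UFA.} My first step is to observe that strong connectivity together with $\rho(M) \le 1$ forces $M(\Sigma^*) \subseteq \{0,1\}^{n \times n}$, so that $M$ is (the morphism of) a strongly connected UFA. Indeed, if some entry satisfied $M(w)(i,j) \ge 2$, then by strong connectivity there is $w'$ with $M(w')(j,i) \ge 1$, whence the cycle $u := w w'$ satisfies $M(u)(i,i) \ge 2$ and therefore $M(u^k)(i,i) \ge 2^k$; this makes $\norm{M(u^k)}$ grow like $2^k$ while $|u^k| = k|u|$, forcing $\rho(M) \ge 2^{1/|u|} > 1$, a contradiction. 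With $M$ now a UFA, \cref{lem-rho<=1,lem-spectral-radius} tell me that the minimum rank is $0$ exactly when $\rho < 1$ (a killing word exists) and is positive exactly when $\rho = 1$; the positive-rank case is covered by a self-contained variant of Carpi's \cref{thm-Carpi}, and the rank-$0$ case is the main new contribution.

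\emph{Rank-reduction framework.} For a word $w$ write $V_w \subseteq \Q^n$ for the row space of $M(w)$, so that $\dim V_w = \text{rank}\,M(w)$ and $V_{wv} = V_w M(v)$; thus the rank is non-increasing under right extension, and appending any global minimum-rank word shows that the minimum rank $r$ is reachable from every $w$. The whole difficulty is therefore to make this descent short. I would reduce \cref{prop-sc} to the following rank-drop lemma: if $\text{rank}\,M(w) = s > r$, then there is a word $v$, short in terms of $s$ and $n$, such that $V_w \cap \{y : y M(v) = 0\} \ne \{0\}$, equivalently $\text{rank}\,M(wv) < s$. Iterating this from the identity (rank $n$) down to rank $r$ and concatenating the extensions yields a minimum-rank word whose length is the sum of the per-step bounds.

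\emph{Finding a short rank-reducing extension.} Here I would exploit unambiguity and strong connectivity together: the goal is to produce a nonzero row-combination $y \in V_w$ and a word $v$ with $y M(v) = 0$, where strong connectivity lets me insert paths of length at most $n-1$ to reach and leave any prescribed state, while unambiguity keeps all matrices $0/1$ and controls the auxiliary constructions (tracking a single distinguished vector or pair of columns, as in Carpi's argument, rather than whole subsets, which is what keeps the search polynomial rather than exponential). The rank-$0$ endgame is the delicate case: once rank $1$ is reached, $M(w)$ has all nonzero rows equal to a single indicator $\chi_T$, and the final drop amounts to killing the set $T$, i.e.\ finding a short $v$ with $\delta(T,v) = \emptyset$; this is exactly the short-killing-word phenomenon and is the step I expect to be the main obstacle, since naive subset-synchronization bounds are exponential and only the combination of $\rho < 1$ with strong connectivity yields a polynomial length. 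Finally, each short extension can be found by a polynomial-time reachability computation in an auxiliary graph of size polynomial in $n$, and there are at most $n$ iterations, giving the claimed polynomial-time algorithm; summing the per-step length bounds—whose dominant contribution is quintic in $n$—and accounting for the strong-connectivity detours produces the stated bound $\ourbound$, the precise constants being a matter of careful but routine bookkeeping.
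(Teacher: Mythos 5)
Your reduction to a strongly connected UFA is correct and matches the paper's \cref{lem-sc-means-UFA}, and you are right that the positive-rank case can be delegated to (a variant of) Carpi's \cref{thm-Carpi}. But the heart of the proposition is the killing-word case, and there your proposal stops exactly where the real work begins. You state a rank-drop lemma (``if $\mathrm{rank}\,M(w)=s>r$ then there is a short $v$ with $\mathrm{rank}\,M(wv)<s$'') and then assert that each such $v$ ``can be found by a polynomial-time reachability computation in an auxiliary graph of size polynomial in $n$.'' No such auxiliary graph is exhibited, and none is apparent: the natural auxiliary object for killing a set $T$ of surviving states is the subset graph, which is exponential, as you yourself note when you call this ``the main obstacle.'' Naming the obstacle is not the same as overcoming it, so the proof as proposed has a genuine gap precisely at its central step.

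For comparison, the paper's mechanism has two ingredients that are absent from your sketch. First, it preprocesses with two words $y$ and $z$ (\cref{lem-extender-all}, built from \cref{lem-product-automaton,lem-extender}) such that no two coreachable states survive $z$ and no two mergeable states are reached by $y$; this forces every set of the form $Q'yz$ to be a disjoint union of sets $q_i z$ for pairwise non-coreachable, non-mergeable states $q_1,\dots,q_k$ (\cref{lem-qi-basic,lem-qi-no-distribution}), so that progress can be measured by the size of a canonical ``generator'' of $Qw$ rather than by the rank of an arbitrary iterate. Second --- and this is the step your ``reachability computation'' would have to replace --- it uses a Sch\"utzenberger/Tzeng-style linear-algebra argument (\cref{lem-UFA-strong-sense}): working in the at most $(n{+}1)$-dimensional span of the vectors $\begin{pmatrix} e M(x) & 1\end{pmatrix}$, it extracts for each $q_i$ a word $x_i$ of length at most $n$ with $q_i z\, x_i\, yz=\emptyset$, using the existence of some killing word only to certify that the linear functional given by $f$ is not constantly $1$ on this space. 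It is this switch from subsets to vector spaces, combined with the $y,z$ normalization, that makes the extensions short and the algorithm polynomial; without these two ingredients your per-step length bound, your termination argument, and hence the bound $\ourbound$ are all unsupported.
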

In the strongly connected case, $M(\Sigma^*)$ does not have numbers larger than~$1$:

\begin{ourlemma} \label{lem-sc-means-UFA}
We have $M(\Sigma^*) \subseteq \{0,1\}^{n \times n}$.
\end{ourlemma}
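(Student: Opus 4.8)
The plan is to argue by contradiction: I assume that some product has an entry of value at least~$2$ and show that this forces exponential growth of the matrix entries, contradicting $\rho(M) \le 1$. So suppose there exist $i,j \in \{1,\ldots,n\}$ and a word $w_0 \in \Sigma^*$ with $M(w_0)(i,j) \ge 2$.

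The key first step is to convert this large off-diagonal entry into a large \emph{diagonal} entry, and this is exactly where strong connectivity enters. By strong connectivity there is a word $v \in \Sigma^*$ with $M(v)(j,i) \ge 1$. Writing $u := w_0 v$ and using that all entries are nonnegative, the definition of matrix multiplication gives
\[
 M(u)(i,i) \ =\ \sum_{k=1}^n M(w_0)(i,k)\, M(v)(k,i) \ \ge\ M(w_0)(i,j)\, M(v)(j,i) \ \ge\ 2\,,
\]
so the single matrix $m := M(u)$ satisfies $m(i,i) \ge 2$.

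Next I would pump the word~$u$. Since $M$ is a monoid morphism, $M(u^k) = m^k$ for all $k \in \N$, and the entry $(m^k)(i,i)$ is bounded below by the contribution of the constant walk that stays at state~$i$, so by nonnegativity $(m^k)(i,i) \ge m(i,i)^k \ge 2^k$. As all norms on $\R^{n \times n}$ are equivalent, a single entry is at most a constant multiple of the norm, hence $\norm{M(u^k)} \ge c_0\, 2^k$ for some constant $c_0 > 0$ depending only on the chosen norm. I then feed this into the definition of the joint spectral radius: the word $u^k$ is a product of $k|u|$ matrices from~$\M$, so $\max\{\norm{M_1 \cdots M_{k|u|}}^{1/(k|u|)} : M_i \in \M\} \ge (c_0\, 2^k)^{1/(k|u|)}$, and letting $k \to \infty$ yields
\[
 \rho(M) \ \ge\ \lim_{k \to \infty} \bigl(c_0\, 2^k\bigr)^{1/(k|u|)} \ =\ 2^{1/|u|} \ >\ 1\,,
\]
contradicting the hypothesis $\rho(M) \le 1$. (Equivalently, the lower bound $\norm{M(u^k)} \ge c_0\, 2^k$ contradicts the polynomial growth bound $B(k) \in O(k^n)$ from the dichotomy of~\cite{Jungers08}.) Therefore every entry of every product lies in $\{0,1\}$, i.e.\ $M(\Sigma^*) \subseteq \{0,1\}^{n \times n}$.

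I expect the only genuinely nontrivial point to be the passage from an arbitrary large entry to a large diagonal entry, which crucially uses strong connectivity; without that hypothesis the statement is simply false, since a single matrix with a large off-diagonal entry and zero elsewhere can have $\rho(M) = 0$. The subsequent pumping and the estimate on the joint spectral radius are routine.
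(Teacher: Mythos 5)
Your proof is correct and follows essentially the same route as the paper: use strong connectivity to turn the entry $M(w_0)(i,j)\ge 2$ into a diagonal entry $M(w_0 v)(i,i)\ge 2$, then pump to get $M((w_0 v)^k)(i,i)\ge 2^k$, contradicting $\rho(M)\le 1$. The extra details you supply (the explicit sum for the diagonal entry, the norm-equivalence step, and taking the limit along the subsequence of lengths $k|w_0v|$) are all sound and merely flesh out what the paper leaves implicit.
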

\begin{proof}
%Let $M$ be strongly connected.
Suppose $M(v)(i,j) \ge 2$ for some \mbox{$v \in \Sigma^*$}.
Since $M$ is strongly connected, there is $w \in \Sigma^*$ with $M(w)(j,i) \ge 1$.
Hence $M(v w)(i,i) \ge 2$.
It follows that $M((v w)^k)(i,i) \ge 2^k$ for all $k \in \N$, contradicting the assumption $\rho(M) \le 1$.
\end{proof}

\Cref{lem-sc-means-UFA} allows us to view the strongly connected case in terms of UFAs.
%Let $Q$ be a finite set of states with $|Q| = n$ and let $\phi : Q$ consider a bijection between $Q$ and $\{1, \ldots, n\}$.
Define a UFA $\A = (\Sigma, Q, \delta)$ with $Q = \{1, \ldots, n\}$ and $\delta(p,a) \ni q$ if and only if $M(a)(p,q) = 1$.
For the rest of the subsection we will mostly consider~$Q$ as an arbitrary finite set of $n$ states. % , and only in the proof of \cref{lem-UFA-strong-sense} use the fact that $Q = \{1, \ldots, n\}$. \SK{check that}
When there is no confusion, we may write $p w$ for $\delta(p,w)$ and $w q$ for $\{p \in Q : p w \ni q\}$.
We extend this to $P w := \bigcup_{p \in P} p w$ and $w P := \bigcup_{p \in P} w p$.
We say a state $p$ is \emph{reached by} a word~$w$ when $w p \ne \emptyset$, and a state $p$ \emph{survives} a word~$w$ when $p w \ne \emptyset$.
Note that $Q w$ is the set of states that are reached by~$w$, and $w Q$ is the set of states that survive~$w$.
Let $q_1 \ne q_2$ be two different states.
Then $q_1, q_2$ are called \emph{coreachable} when there is $w \in \Sigma^*$ with $w q_1 \cap w q_2 \ne \emptyset$ (i.e., there is $p \in Q$ with $p w \supseteq \{q_1, q_2\}$),
and they are called \emph{mergeable} when there is $w \in \Sigma^*$ with $q_1 w \cap q_2 w \ne \emptyset$.
For any $q \in Q$ we define $C(q)$ as the set of states coreachable with~$q$.
Also, define $c := \max \{|q w| : q \in Q,\ w \in \Sigma^*\}$ and $m := \max \{|w q| : w \in \Sigma^*, \ q \in Q\}$.
The following lemma says that one can compute short witnesses for coreachability:
\begin{ourlemma} \label{lem-product-automaton}
If states $q \ne q'$ are coreachable, then one can compute in polynomial time $w_{q,q'} \in \Sigma^*$ with $|w_{q,q'}| \le \frac12 (n+2) (n-1)$ such that $q w_{q,q'} \supseteq \{q, q'\}$.
\end{ourlemma}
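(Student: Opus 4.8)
The plan is to build a product automaton on pairs of states and extract a short witness by breadth-first search. Define the \emph{pair automaton} $\A^{(2)}$ whose states are the unordered pairs $\{s,t\}$ with $s,t \in Q$ (allowing $s = t$), of which there are exactly $\binom{n}{2} + n = \frac{n(n+1)}{2}$, and whose transitions are given by $\{s,t\} \xrightarrow{a} \{s',t'\}$ whenever $s' \in s a$ and $t' \in t a$ for $a \in \Sigma$. By construction, a path in $\A^{(2)}$ labelled by $w$ corresponds to two $w$-labelled walks in~$\A$ running in parallel, one for each token. In particular, a path $\{q,q\} \xrightarrow{w} \{q,q'\}$ yields a $w$-labelled walk from $q$ to~$q$ and a $w$-labelled walk from $q$ to~$q'$, which is precisely $q w \supseteq \{q,q'\}$; conversely any such pair of walks projects to a path in $\A^{(2)}$.

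First I would establish that $\{q,q'\}$ is reachable from the diagonal state $\{q,q\}$ in $\A^{(2)}$. Since $q,q'$ are coreachable, there exist $p \in Q$ and $v \in \Sigma^*$ with $p v \supseteq \{q,q'\}$, which in $\A^{(2)}$ is a path $\{p,p\} \xrightarrow{v} \{q,q'\}$. Because $M$ is strongly connected, there is $u \in \Sigma^*$ with $q u \ni p$; letting both tokens follow this single walk gives $\{q,q\} \xrightarrow{u} \{p,p\}$. Concatenating yields $\{q,q\} \xrightarrow{uv} \{q,q'\}$, so $\{q,q'\}$ is reachable from $\{q,q\}$.

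Next, I would run breadth-first search in $\A^{(2)}$ from $\{q,q\}$ to obtain a shortest path to $\{q,q'\}$ and let $w_{q,q'}$ be its label. A shortest path is simple, so it visits at most all $\frac{n(n+1)}{2}$ states and uses at most $\frac{n(n+1)}{2} - 1 = \frac12 (n+2)(n-1)$ edges; hence $|w_{q,q'}| \le \frac12 (n+2)(n-1)$, and by the correspondence above $q w_{q,q'} \supseteq \{q,q'\}$. Since $\A^{(2)}$ has $O(n^2)$ states and $O(n^2 |\Sigma|)$ transitions, both its construction and the search run in time polynomial in the description size of~$M$.

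The step I expect to be the main obstacle is not the length bound, which drops out of the simple-path property of a shortest path combined with the exact count $\frac{n(n+1)}{2}$ of unordered pairs, but rather the reachability claim: verifying that $\{q,q'\}$ is reachable specifically from the diagonal state $\{q,q\}$, and not merely from some diagonal state $\{p,p\}$ supplied by coreachability. This is exactly where the strong connectivity hypothesis of this subsection is needed, to steer both tokens from $q$ to the coreachability witness $p$ before letting them split toward $q$ and~$q'$.
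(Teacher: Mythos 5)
Your proposal is correct and follows essentially the same route as the paper: both construct a graph on the $\frac{n(n+1)}{2}$ unordered pairs of states (including the diagonal), use strong connectivity to route from $q$ back to the coreachability witness $p$ so that $\{q,q'\}$ is reachable from $\{q\}$, and extract the bound $\frac{n(n+1)}{2}-1 = \frac12(n+2)(n-1)$ from a shortest path. The only cosmetic difference is that the paper's edge relation is $Ra \supseteq S$ rather than your token-wise condition, which does not affect the argument.
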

\begin{proof}
Let $q \ne q'$ be coreachable states.
Then there are $p \in Q$ and $v \in \Sigma^*$ with $p v \supseteq \{q, q'\}$.
Since $M$ is strongly connected, there is $u \in \Sigma^*$ with $q u \ni p$, hence $q u v \supseteq \{q, q'\}$.
Define an edge-labelled directed graph $G = (V,E)$ with vertex set $V = \{\{r,s\} : r,s \in Q\}$ and edge set $E = \{(R,a,S) \in V \times \Sigma \times V: R a \supseteq S\}$.
Since $q u v \supseteq \{q, q'\}$, the graph~$G$ has a path, labelled by $u v$, from $\{q\}$ to $\{q,q'\}$.
The shortest path from $\{q\}$ to $\{q,q'\}$ has at most $|V|-1$ edges and is thus labelled with a word $w \in \Sigma^*$ with $|w| \le |V| - 1 = \frac12 n(n+1) - 1 = \frac12 (n+2) (n-1)$.
For this~$w$ we have $q w \supseteq \{q, q'\}$.
\end{proof}

\begin{ourlemma} \label{lem-extender}
For each $q \in Q$ one can compute in polynomial time a word $w_q \in \Sigma^*$ with $|w_q| \le \frac12 (c-1) (n+2) (n-1)$ such that no state $q' \ne q$ survives~$w_q$ and is coreachable with~$q$.
%\CM{It is true, but the fact that $q$ survives $w_q$ does not seem necessary for the rest of the proof.} \SK{Probably true, but for the picture in \cref{lem-extender-all} it may be better.}
\end{ourlemma}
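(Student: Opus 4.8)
The plan is to reduce the lemma to a single clean phenomenon: a word that makes the forward image $q w$ as large as possible automatically kills every state coreachable with~$q$, and such a word can be reached greedily. First I would isolate the structural fact that drives everything: if $q$ and~$q'$ are coreachable, then they are not mergeable, so their images stay disjoint under every word, i.e. $q z \cap q' z = \emptyset$ for all $z \in \Sigma^*$. This is where unambiguity is essential: a common source~$p$ with $p v \supseteq \{q, q'\}$ together with a common target $r \in q z \cap q' z$ would give two \emph{distinct} $vz$-labelled paths from~$p$ to~$r$ (one passing through~$q$, one through~$q'$ after reading~$v$), contradicting that $\A$ is a UFA. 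I expect this disjointness fact to be the main obstacle, in the sense that it is the only genuinely nontrivial input; everything else is bookkeeping around it.

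With this fact in hand I would construct~$w_q$ by a greedy loop that maintains a word~$w$, initialised to~$\varepsilon$. While some state $q' \ne q$ that is coreachable with~$q$ still survives~$w$ (that is, $q' w \ne \emptyset$), I \emph{prepend} the short coreachability witness $w_{q,q'}$ of \cref{lem-product-automaton}, which satisfies $q\, w_{q,q'} \supseteq \{q, q'\}$, replacing~$w$ by $w_{q,q'}\, w$. The key point is that this strictly enlarges the image of~$q$: since $q, q' \in q\, w_{q,q'}$ we have $q(w_{q,q'}\, w) \supseteq q w \cup q' w$, and by the disjointness fact $q w \cap q' w = \emptyset$, while $q' w \ne \emptyset$ by the loop condition, so $|q(w_{q,q'}\, w)| \ge |q w| + |q' w| > |q w|$. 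When the loop terminates, by definition no state coreachable with~$q$ other than~$q$ survives the current~$w$, which is exactly the required property, and I set $w_q := w$.

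It then remains to bound the length, which falls out for free. The quantity $|q w|$ equals~$1$ initially, strictly increases at every iteration, and never exceeds $c = \max\{|q w| : q \in Q,\ w \in \Sigma^*\}$; hence the loop runs for at most $c - 1$ iterations. Each iteration prepends a word of length at most $\frac12 (n+2)(n-1)$ by \cref{lem-product-automaton}, so $|w_q| \le \frac12 (c-1)(n+2)(n-1)$, as claimed. Polynomial-time computability is immediate: coreachability and the witnesses are obtained as in \cref{lem-product-automaton}, and testing whether a coreachable state survives the current~$w$ is a reachability check in~$\A$. Thus, once the disjointness fact is established, the termination condition of the loop coincides with the goal, and the bound $|q w| \le c$ converts the greedy process directly into the stated length bound.
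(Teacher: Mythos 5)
Your proposal is correct and follows essentially the same route as the paper: the same greedy prepending loop with the witnesses $w_{q,q'}$, the same disjointness-via-unambiguity argument showing $q w$ grows strictly, and the same $|q w| \le c$ bound giving at most $c-1$ iterations. The only cosmetic difference is that you state the disjointness of $q w$ and $q' w$ as a general fact about coreachable states, whereas the paper derives it in situ from the specific two paths labelled $w_{q,q'}\,w$ starting at~$q$; both are the same unambiguity argument.
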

\begin{proof}
Let $q \in Q$.
Consider the following algorithm:

%\begin{enumerate}
%\item $w := \varepsilon$ (the empty word)
%\item while there is $q' \in C(q)$ such that $q'$ survives~$w$: \\ \mbox{}\hspace{10mm} $w := w_{q,q'} w$ (with $w_{q,q'}$ from \cref{lem-product-automaton})
%\item return $w_q := w$
%\end{enumerate}
\begin{algorithmic}[1]
\STATE{$w := \varepsilon$}
\WHILE{there is $q' \in C(q)$ such that $q'$ survives~$w$}
\STATE{$w := w_{q,q'} w$ (with $w_{q,q'}$ from \cref{lem-product-automaton})}
\ENDWHILE
\RETURN $w_q := w$
\end{algorithmic}
The following picture visualizes aspects of this algorithm:
\begin{center}
\begin{tikzpicture}[xscale=3,yscale=2,AUT style]
\node[state] (q) at (0,0) {$q$};
\node[state] (qp) at (0,-1) {$q'$};
\node[state,ellipse,minimum height=40,minimum width=30] (qw) at (1,0) {$q w$};
\node[state,ellipse,minimum height=40,minimum width=30] (qpw) at (1,-1) {$q' w$};
\path[->] (q) edge [loop,out=160,in=200,looseness=9] node[left] {$w_{q,q'}$} (q);
\path[->] (q) edge node[left] {$w_{q,q'}$} (qp);
\path[->] (q) edge node[above] {$w$} (qw);
\path[->] (qp) edge node[above] {$w$} (qpw);
\end{tikzpicture}
\end{center}
We argue that the computed word~$w_q$ has the required properties.
First we show that the set $q w$ increases in each iteration of the algorithm.
Indeed, let $w$ and $w_{q,q'} w$ be the words computed by two subsequent iterations.
Since $q w_{q,q'} \supseteq \{q, q'\}$, we have $q w_{q,q'} w \supseteq q w \cup q' w$.
The set $q' w$ is nonempty, as $q'$ survives~$w$.
As can be read off from the picture above, the sets $q w$ and~$q' w$ are disjoint, as otherwise there would be two distinct paths from $q$ to a state in $q w \cap q' w$, both labelled by~$w_{q,q'} w$, contradicting unambiguousness.
It follows that $q w_{q,q'} w \supsetneq q w$.
Hence the algorithm must terminate.

%Since $q \varepsilon = \{q\}$ and all sets $q w$ during the algorithm are included in $\{q\} \cup C(q)$, there are at most $c$ iterations.
%Hence $|w_q| \le c \cdot \left( \frac12 n (n+1) - 1 \right)$.
Since in each iteration the set $q w$ increases by at least one element (starting from $\{q\}$), there are at most $c-1$ iterations.
Hence $|w_q| \le \frac12 (c-1)(n+2)(n-1)$.
%
%We have $q w_q \ni q$, so $q$ survives~$w_q$ \CM{This is not necessary}.
There is no state $q' \ne q$ that survives~$w_q$ and is coreachable with~$q$, as otherwise the algorithm would not have terminated.
\end{proof}

\begin{ourlemma} \label{lem-extender-all}
One can compute in polynomial time words $z, y \in \Sigma^*$ such that:
\begin{itemize}
\item $|z| \le \frac14 (c-1)(n+2)n(n-1)$ and there are no two coreachable states that both survive~$z$;
\item $|y| \le \frac14 (m-1)(n+2)n(n-1)$ and there are no two mergeable   states that are both reached by~$y$.
\end{itemize}
\end{ourlemma}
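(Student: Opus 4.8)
The plan is to prove the first bullet and to obtain the second one for free by a reversal argument. Let $\A^R$ denote the reverse automaton on the same state set~$Q$, obtained by reversing every transition; since reversal is a label-preserving bijection on paths, $\A^R$ is again an unambiguous automaton, and it is strongly connected because $\A$ is. One checks directly from the definitions that a state survives a word~$u$ in $\A^R$ if and only if it is reached by the reverse word in~$\A$, that two states are coreachable in $\A^R$ if and only if they are mergeable in~$\A$, and that the parameter~$c$ of $\A^R$ equals the parameter~$m$ of~$\A$. Hence applying the first bullet to~$\A^R$ produces a word whose reverse is the desired~$y$, with the length bound $\frac14(m-1)(n+2)n(n-1)$. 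So it suffices to construct~$z$.

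For~$z$ I would proceed iteratively, extending a word~$w$ (initially~$\varepsilon$) by \emph{appending}, and tracking the survivor set $S := wQ$. Appending is the right direction because it makes $S$ monotonically nonincreasing: if $r w u \ne \emptyset$ then $rw \ne \emptyset$. While $S$ still contains a coreachable pair $q \ne q'$, I would use \cref{lem-extender}: the word~$w_q$ has the property that no state of $C(q)$ survives it, so reading~$w_q$ wipes out the entire coreachability class of~$q$; since $q' \in C(q)$ and every possible coreachable partner of~$q$ lies in $C(q)$, this handles both endpoints of the pair simultaneously. The aim is to show that one such extension, of length $|w_q| \le \frac12(c-1)(n+2)(n-1)$, removes at least two states from~$S$. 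As $|S| \le n$ initially and drops by at least two per round, there are at most $n/2$ rounds, and the total length is at most $\tfrac{n}{2}\cdot\tfrac12(c-1)(n+2)(n-1) = \frac14(c-1)(n+2)n(n-1)$, as required.

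The technical heart—and the step I expect to be the main obstacle—is reconciling the guarantee of \cref{lem-extender} with the iterated, appended word and pinning down the factor~$n/2$. \cref{lem-extender} controls the survivors of~$w_q$ read \emph{from the start}, i.e.\ it describes how $w_q$ acts on individual states; but appending~$w_q$ to~$w$ makes it act on the downstream reach sets $rw$ rather than on~$r$ directly, so "state of $C(q)$ does not survive~$w_q$" does not translate verbatim into "leaves the current set~$S$". Prepending, by contrast, acts on~$r$ directly and immediately kills all of~$C(q)$, but it is not monotone and may resurrect previously dead states, so the survivor count is not a valid potential. Resolving this tension—producing a genuinely monotone measure that provably decreases by two in each round—is where the real work lies. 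I would attack it by exploiting that a \emph{surviving} coreachable pair forces the automaton to be non-complete: by \cref{lem-sc-means-UFA} all matrices lie in $\{0,1\}^{n\times n}$, and the Perron--Frobenius/subinvariance reasoning already used in \cref{lem-rho<=1,lem-spectral-radius} shows that $\rho(M)\le 1$ together with strong connectivity rules out a complete automaton once some reach set has size at least two. This should yield the quantitative, strict progress needed to guarantee that each round eliminates two states from~$S$ and hence caps the number of rounds at~$n/2$.
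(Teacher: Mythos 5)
Your overall skeleton---append words from \cref{lem-extender} while a surviving coreachable pair exists, bound the number of rounds by $n/2$, and obtain $y$ from $z$ by reversal/duality---is the same as the paper's, and the duality reduction is fine. But the step you yourself flag as ``the technical heart'' is precisely where the proposal breaks down, and the fix you sketch does not supply it. The first problem is your choice of the state for which $w_q$ is built: you take $q$ to be a member of the surviving coreachable pair itself. \Cref{lem-extender} guarantees that no state coreachable \emph{with $q$} survives $w_q$; but since you append, $w_q$ acts on the downstream sets $q'w$, and the states of $q'w$ are coreachable with the states of $qw$, not in general with $q$. The paper resolves exactly this mismatch by choosing $q$ to be an arbitrary element of $pw$, where $p$ is a state with a surviving coreachable partner: then for every $p''\in C(p)$ the whole set $p''w$ is coreachable with $q$ (via the common ancestor of $p$ and $p''$ followed by $w$), while $q\notin p''w$ by unambiguousness, so $p''ww_q=\emptyset$. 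Without this shift of $q$ downstream of $w$, the guarantee of \cref{lem-extender} simply does not apply to the sets you need it to kill.

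The second problem is the potential function. You track $|wQ|$ and claim it drops by two per round, but the argument above does not give that: the picked state $p$ may well still survive $ww_q$ --- only the states of $C(p)$ are guaranteed to die. The quantity that provably loses at least two elements per iteration is the set $B$ of states that survive $w$ \emph{and} have a coreachable partner that also survives $w$: the round removes $p$ from $B$ because every element of $C(p)$ stops surviving (so $p$ has no surviving partner left), and it removes $p'$ from $B$ because $p'\in C(p)$ itself stops surviving. Finally, the rescue you propose via Perron--Frobenius and non-completeness of the automaton is not the right tool: it is a global spectral statement and yields no per-iteration combinatorial progress; the actual argument needed here is a purely counting argument on $B$ driven by unambiguousness.
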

\begin{proof}
As the two statements are dual, we prove only the first one.
Consider the following algorithm:

%\begin{enumerate}
%\item $w := \varepsilon$ (the empty word)
%\item while there are coreachable $p, p'$ that both survive~$w$: \\
%    \mbox{}\hspace{10mm} $q := $ arbitrary state from $p w$ \\
%    \mbox{}\hspace{10mm} $w := w w_q$ (with $w_q$ from \cref{lem-extender})
%\item return $z := w$
%\end{enumerate}
\begin{algorithmic}[1]
\STATE{$w := \varepsilon$}
\WHILE{there are coreachable $p, p'$ that both survive~$w$}
\STATE{$q := $ arbitrary state from $p w$}
\STATE{$w := w w_q$ (with $w_q$ from \cref{lem-extender})}
\ENDWHILE
\RETURN $z := w$
\end{algorithmic}
We show that the set
\[
 B \ := \ \{p_1 \in Q : \exists\, p_2 \in C(p_1) \text{ such that both } p_1,p_2 \text{ survive } w\}
\]
loses at least two states in each iteration.
First observe that
\[
 B' \ := \ \{p_1 \in Q : \exists\, p_2 \in C(p_1) \text{ such that both } p_1,p_2 \text{ survive } w w_q\}
\]
is clearly a subset of~$B$.

Let $p \in B$ be the state from line~$2$ of the algorithm, and let $q \in p w$ be the state from the body of the loop.
We claim that no $p'' \in C(p)$ survives $w w_q$.
Indeed, let $p'' \in C(p)$.
%We show that $p \not\in B'$ and $p' \not\in B'$.
The following picture visualizes the situation:
\begin{center}
\begin{tikzpicture}[xscale=3,yscale=1,AUT style]
\node[state] (o) at (0,0) {};
\node[state] (p) at (1,1) {$p$};
\node[state] (pp) at (1,-1) {$p''$};
\node[state] (q) at (2,1) {$q$};
\node[states] (ppw) at (2,-1) {$p'' w$};
\node[state,dashed] (f) at (3,-1) {};
\path[->] (o) edge node[above] {$u$} (p);
\path[->] (o) edge node[above] {$u$} (pp);
\path[->] (p) edge node[above] {$w$} (q);
\path[->] (pp)edge node[above] {$w$} (ppw);
\path[->,dashed] (ppw) edge node[above] {$w_q$} (f);
\draw[red,thick] (3.0,-0.3) edge (2.2,-1.7);
\draw[red,thick] (3.0,-1.7) edge (2.2,-0.3);
\end{tikzpicture}
\end{center}
By unambiguousness and since $q \in p w$, we have $q \not\in p'' w $.
By the definition of~$w_q$ and since all states in $p'' w$ are coreachable with~$q$, we have $p'' w w_q = \emptyset$, which proves the claim.

By the claim, we have $p \not\in B'$.
Let $p' \in B$ be the state~$p'$ from line~$2$ of the algorithm.
We have $p' \in C(p)$.
By the claim, $p'$ does not survive $w w_q$.
Hence $p' \not\in B'$.

So we have shown that the algorithm removes at least two states from~$B$ in every iteration.
Thus it terminates after at most $\frac n 2$ iterations.
Using the length bound from \cref{lem-extender} we get $|z| \le \frac14 (c-1)(n+2)n(n-1)$.
There are no coreachable $q,q'$ that both survive~$z$, as otherwise the algorithm would not have terminated.
\end{proof}

For the following development, let $q_1, \ldots, q_k$ be the states that are reached by~$y$ and survive~$z$ (with $y, z$ from \cref{lem-extender-all}), see \cref{fig-q1k}.
\begin{figure}[ht]
\begin{center}
\begin{tikzpicture}[xscale=3,yscale=2,AUT style]
\node[state] (q1) at (0,0) {$q_1$};
\node[state] (q2) at (0,-1) {$q_2$};
\node[state] (qk) at (0,-2.8) {$q_k$};
\node[states] (wpq1) at (-1,0) {$y q_1$};
\node[states] (wpq2) at (-1,-1) {$y q_2$};
\node[states] (wpqk) at (-1,-2.8) {$y q_k$};
\node[states] (q1z) at (1,0) {$q_1 z$};
\node[states] (q2z) at (1,-1) {$q_2 z$};
\node[states] (qkw) at (1,-2.8) {$q_k z$};
\foreach \i in {-1,0,1}{
\node[circle,fill,inner sep=1.0] at (\i,-1.7) {};
\node[circle,fill,inner sep=1.0] at (\i,-1.9) {};
\node[circle,fill,inner sep=1.0] at (\i,-2.1) {};
}
\path[->] (wpq1) edge node[above] {$y$} (q1);
\path[->] (wpq2) edge node[above] {$y$} (q2);
\path[->] (wpqk) edge node[above] {$y$} (qk);
\path[->] (q1) edge node[above] {$z$} (q1z);
\path[->] (q2) edge node[above] {$z$} (q2z);
\path[->] (qk) edge node[above] {$z$} (qkw);
\node[state,dashed] (l) at (-2,-0.5) {};
\node[state,dashed] (r) at (+2,-0.5) {};
\path[->,dashed] (l) edge node[above] {$u$} (wpq1);
\path[->,dashed] (l) edge node[above] {$u$} (wpq2);
\path[->,dashed] (q1z) edge node[above] {$u'$} (r);
\path[->,dashed] (q2z) edge node[above] {$u'$} (r);
\draw[red,thick] (-2.1,0.1) edge (-1.25,-1.1);
\draw[red,thick] (-2.1,-1.1) edge (-1.25,0.1);
\draw[red,thick] (2.1,0.1) edge (1.25,-1.1);
\draw[red,thick] (2.1,-1.1) edge (1.25,0.1);
%\fill[black] (0,-1.5) circle (0.02);
\end{tikzpicture}
\end{center}
\caption{The states $q_1, \ldots, q_k$ are neither coreachable nor mergeable.}
\label{fig-q1k}
\end{figure}
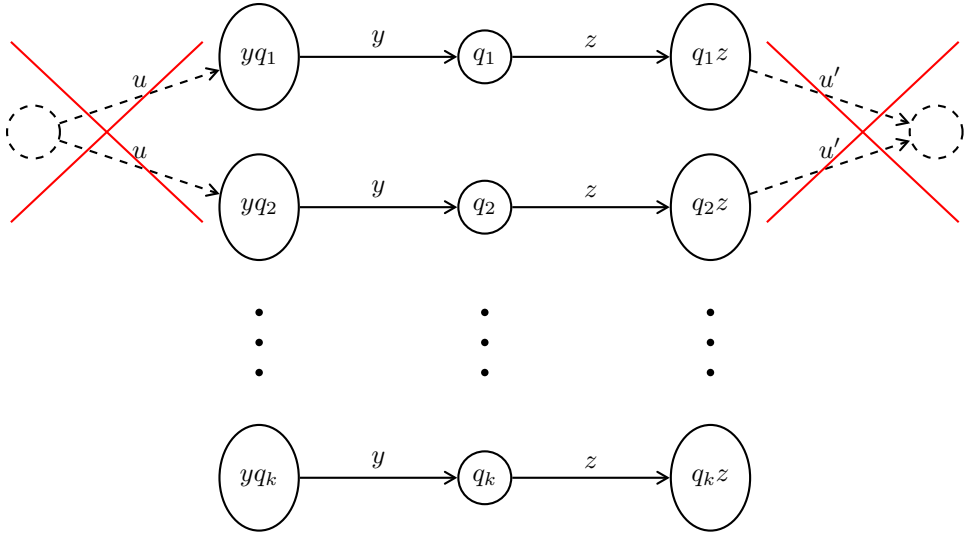
\begin{ourlemma} \label{lem-qi-basic}
Let $1 \le i < j \le k$.
Then $q_i, q_j$ are neither coreachable nor mergeable.
%Hence $y q_i \cap y q_j = \emptyset$ and $q_i z \cap q_j z = \emptyset$.
\end{ourlemma}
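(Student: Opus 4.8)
The plan is to observe that this statement is an immediate consequence of the two defining properties of $y$ and~$z$ established in \cref{lem-extender-all}. By the very definition of $q_1, \ldots, q_k$, each $q_i$ both survives~$z$ and is reached by~$y$. So the two non-coreachability and non-mergeability conclusions will each follow from exactly one of the two bullets of \cref{lem-extender-all}, without any further combinatorial work.

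First I would treat coreachability. Fix $i < j$, so that $q_i \ne q_j$. Both $q_i$ and $q_j$ survive~$z$ by the choice of the states $q_1, \ldots, q_k$. The first bullet of \cref{lem-extender-all} asserts that there are no two coreachable states that both survive~$z$. Hence $q_i$ and $q_j$ cannot be coreachable.

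The mergeability case is completely dual. Both $q_i$ and $q_j$ are reached by~$y$, again by the choice of the states. The second bullet of \cref{lem-extender-all} asserts that no two mergeable states are both reached by~$y$. Hence $q_i$ and $q_j$ cannot be mergeable.

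I do not anticipate a genuine obstacle here: all the substantive effort was front-loaded into the construction of~$y$ and~$z$ (which in turn relied on \cref{lem-extender,lem-product-automaton}), and the present statement merely specializes those two global guarantees to the particular states $q_1, \ldots, q_k$. The only point worth verifying is that the list $q_1, \ldots, q_k$ consists of distinct states, so that for $i < j$ we indeed have $q_i \ne q_j$ and the notions of coreachability and mergeability (defined only for pairs of \emph{distinct} states) apply.
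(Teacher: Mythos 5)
Your proposal is correct and matches the paper's own argument, which simply states that the claim is immediate from the properties of $y$ and~$z$ in \cref{lem-extender-all}; you have merely spelled out which bullet handles which conclusion, using that each $q_i$ by definition is reached by~$y$ and survives~$z$. Your closing remark about distinctness is also fine, since the $q_1,\ldots,q_k$ are by construction an enumeration of distinct states.
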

\begin{proof}
Immediate from the properties of $y, z$ (\cref{lem-extender-all}).
\end{proof}
%The following lemma asserts that for any $i \in \{1, \ldots, k\}$ and any word ending with $y z$ there is $j \in \{1, \ldots, k\}$ such that whenever a state in $q_i z$ survives  then the word leads to $q_j z$. \CM{That is not quite true as in the lemma $j$ can depend on the word $v$.}%
The following lemma restricts sets of the form $q_i z x y z$ for $i \in \{1, \ldots, k\}$ and $x \in \Sigma^*$:
\begin{ourlemma} \label{lem-qi-no-distribution}
Let $i \in \{1, \ldots, k\}$ and $x \in \Sigma^*$.
Then there is $j \in \{1, \ldots, k\}$ such that $q_i z x y z \subseteq q_j z$.
\end{ourlemma}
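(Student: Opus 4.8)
The plan is to analyze the set $q_i z x y$ and isolate exactly which of its states contribute to $q_i z x y z$. First I would set $S := q_i z x y$ and let $S' := \{s \in S : s z \ne \emptyset\}$ be the subset of states in~$S$ that survive~$z$. Since a state $s$ with $s z = \emptyset$ contributes nothing to the union defining $S z$, we have
\[
 q_i z x y z \ = \ S z \ = \ S' z,
\]
so it suffices to understand the set~$S'$.

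The key observation is that $S' \subseteq \{q_1, \ldots, q_k\}$. Indeed, every $s \in S = q_i z x y = (q_i z x) y$ lies in $t y$ for some $t \in q_i z x$, which gives $t \in y s$ and hence $y s \ne \emptyset$, i.e.\ $s$ is reached by~$y$; and by the definition of~$S'$, each such~$s$ also survives~$z$. Since $q_1, \ldots, q_k$ are precisely the states that are reached by~$y$ and survive~$z$, the inclusion follows. The crux is then to show $|S'| \le 1$ using coreachability: suppose $S'$ contained two distinct states $q_a \ne q_b$. Both lie in $q_i z x y = q_i(z x y)$, so taking $p := q_i$ and $w := z x y$ we have $p w \supseteq \{q_a, q_b\}$, which by definition means $q_a$ and $q_b$ are coreachable. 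But $q_a, q_b \in \{q_1, \ldots, q_k\}$, contradicting \cref{lem-qi-basic}. Hence $|S'| \le 1$.

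It remains to conclude. If $S' = \emptyset$, then $q_i z x y z = \emptyset \subseteq q_j z$ for any~$j$ (and $k \ge 1$, since the index~$i$ exists). If $S' = \{q_j\}$ for some $j \in \{1, \ldots, k\}$, then $q_i z x y z = S' z = q_j z \subseteq q_j z$. Either way the required~$j$ exists. The whole argument uses nothing about~$x$, which is why the lemma holds for every~$x \in \Sigma^*$. I do not expect a genuine obstacle here: the only piece of real content is the coreachability contradiction invoking \cref{lem-qi-basic}, and everything else is bookkeeping with the definitions of ``survives'', ``reached by'', and coreachable, together with the reduction $S z = S' z$.
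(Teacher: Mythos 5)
Your proof is correct and follows essentially the same route as the paper: both arguments boil down to observing that any state of $q_i z x y$ that survives~$z$ is reached by~$y$ and hence equals some~$q_j$, and that two distinct such states would be coreachable via $q_i(zxy)$, contradicting \cref{lem-qi-basic}. Your reformulation via the set $S'$ is just a slightly more explicit bookkeeping of the paper's element-chasing argument.
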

\begin{proof}
If $q_i z x y z = \emptyset$ then choose $j$ arbitrarily.
Otherwise, let $q \in q_i z x y z$.
Then $q$ is reached by $y z$, so there is $j$ with $q_i z x y \ni q_j$ and $q_j z \ni q$.
We show that $q_i z x y z \subseteq q_j z$.
To this end, let $q' \in q_i z x y z$.
Then $q'$ is reached by~$y z$, so there is $j'$ with $q_i z x y \ni q_{j'}$ and $q_{j'} z \ni q'$.
Since $q_i z x y \supseteq \{q_j, q_{j'}\}$ and $q_j, q_{j'}$ are not coreachable %\CM{if $j \neq j'$}
(by \cref{lem-qi-basic}), we have $j' = j$.
Hence $q_j z = q_{j'} z \ni q'$.
\end{proof}
Provided that there is a killing word (which can be checked in polynomial time via \cref{prop-mortality}), the following lemma asserts that for each $i \in \{1, \ldots, k\}$ one can efficiently compute a short word~$x_i$ such that no state in $q_i z$ survives $x_i y z$.
The proof hinges on a linear-algebra technique for checking equivalence of automata that are weighted over a field.
The argument goes back to Sch\"utzenberger~\cite{Schutzenberger} and has often been rediscovered, see, e.g., \cite{Tzeng}.
\begin{ourlemma} \label{lem-UFA-strong-sense}
Suppose that $0 \in M(\Sigma^*)$. %there exists $w_0 \in \Sigma^*$ with $M(w_0) = 0$ (this word~$w_0$ may not be given).
For each $i \in \{1, \ldots, k\}$ one can compute in polynomial time a word $x_i \in \Sigma^*$ with $|x_i| \le n$ such that $q_i z x_i y z = \emptyset$.
\end{ourlemma}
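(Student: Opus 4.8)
The plan is to recast the set-theoretic target $q_i z\, x_i\, y z = \emptyset$ as a single scalar linear condition and then apply the Schützenberger--Tzeng span-exploration to obtain a short witness. Let $r_i \in \{0,1\}^{1\times n}$ be the indicator row vector of the set $q_i z$ (with $q_1,\dots,q_k$ and $y,z$ as fixed after \cref{lem-extender-all}; see \cref{fig-q1k}), and let $\eta \in \N^{n\times 1}$ be the column vector $\eta := M(yz)\mathbf{1}$, whose support is exactly the set of states surviving $yz$. Because $r_i$, every $M(a)$, and $\eta$ are nonnegative, the row vector $r_i M(x) M(yz)$ is nonnegative, so it is the zero vector if and only if its coordinate sum $r_i M(x)\,\eta$ vanishes. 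Since the support of $r_i M(x)M(yz)$ is precisely $q_i z\, x\, y z$, I would record the equivalence: $q_i z\, x\, y z = \emptyset$ iff $r_i M(x)\,\eta = 0$. This reframes the task as \emph{finding a short word of weight zero} in the $\Q$-weighted automaton with initial vector $r_i$, transition matrices $(M(a))_{a\in\Sigma}$, and final vector $\eta$.

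Next I would establish existence and invoke the linear-algebra tool. The hypothesis $0 \in M(\Sigma^*)$ supplies a killing word $w_0$ with $M(w_0)=0$, hence $r_i M(w_0)\,\eta = 0$: a zero-weight word exists, and it only remains to shorten it. For the shortening, I would run the Schützenberger--Tzeng breadth-first exploration of the reachable row vectors $\{r_i M(x)\}$ in order of increasing $|x|$, maintaining a maximal linearly independent subset. As these vectors lie in $\Q^n$, the search stabilizes after depth at most $n$, runs in time polynomial in the description of~$M$, and produces a spanning basis reached by words of length at most $n-1$, through which every $r_i M(x)$ is a $\Q$-combination of basis images. The structural input here is \cref{lem-qi-no-distribution}: for every $x$ the vector $r_i M(xyz)$ is supported inside a single column $q_j z$, so the reachable images collapse onto the $k \le n$ canonical columns rather than spreading arbitrarily, which is what I would use to force the weight-zero value to surface inside the depth-$n$ horizon and to read off the desired $x_i$ with $|x_i|\le n$ in polynomial time.

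The hard part is exactly the length bound in the previous step. For an \emph{arbitrary} field-weighted automaton the statement is false: rotation-type transition matrices produce a weight function $x \mapsto r_i M(x)\eta$ that oscillates and can first vanish only at a length unrelated to $n$, so no span/dimension argument alone can localize a zero within depth $n$. Thus the proof must genuinely exploit \emph{nonnegativity} of the matrices together with the single-column collapse from \cref{lem-qi-no-distribution} (and the non-coreachability of $q_1,\dots,q_k$ from \cref{lem-qi-basic}) to rule out such oscillation, ensuring that once a zero-weight word exists one is detected while building the $\le n$-dimensional reachable span. I expect the delicate bookkeeping to be in showing that, under these nonnegativity and single-column constraints, the absence of a zero-weight word among words of length at most $n$ would propagate to all words, contradicting the existence of $w_0$; checking whether such a word exists at all is already guaranteed to be decidable in polynomial time by \cref{prop-mortality}, so the remaining content is purely the effective length bound and its polynomial-time extraction.
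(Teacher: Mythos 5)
You have correctly identified the framework the paper uses (recasting the target as a scalar condition $e\,M(x)\,f=0$ and running the Sch\"utzenberger--Tzeng span exploration), and you have also correctly diagnosed that the span argument alone cannot localize a zero-weight word within depth $n$ for a general weighted automaton. But that diagnosis is where your proposal stops: the sentence ``I expect the delicate bookkeeping to be in showing that \dots'' is precisely the missing proof, and nothing in your write-up supplies it. The paper's mechanism has two ingredients you do not have. First, with $f$ the \emph{characteristic} column vector of $y\{q_1,\dots,q_k\}$, one proves $e\,M(x)\,f\le 1$ for \emph{all} $x$: two distinct $x$-labelled paths from $q_iz$ to $y\{q_1,\dots,q_k\}$ would yield two distinct $zxy$-labelled paths from $q_i$ ending (by unambiguousness) in distinct states $q_j,q_{j'}$, making them coreachable and contradicting \cref{lem-qi-basic}. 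So the functional $x\mapsto e\,M(x)\,f$ takes only the two values $0$ and $1$. Second, one spans not the vectors $e\,M(x)$ but the \emph{augmented} vectors $\begin{pmatrix} e\,M(x) & 1\end{pmatrix}\in\R^{n+1}$; since $\begin{pmatrix}0 & 1\end{pmatrix}$ lies in this space (take $x=w_0$) and is not orthogonal to $\bigl(\begin{smallmatrix} f\\ -1\end{smallmatrix}\bigr)$, some basis word $x$ with $|x|\le n$ satisfies $e\,M(x)\,f\ne 1$, hence $e\,M(x)\,f=0$ by integrality and the bound above. Without the augmentation, ``the zero vector is in the span of the basis'' is vacuous (all coefficients zero) and yields no contradiction; this constant coordinate is the step that makes the length bound work, and it is absent from your plan.

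A second, concrete defect: your choice of final vector $\eta:=M(yz)\mathbf{1}$ breaks the two-valuedness that the argument needs. The quantity $r_i\,M(x)\,\eta$ equals $|q_iz\,xyz|$, which by \cref{lem-qi-no-distribution} is either $0$ or $|q_jz|$ for some $j$ depending on $x$; since the nonzero value varies with $x$, no single affine functional separates ``zero'' from ``nonzero,'' and the ``not equal to $1$ hence equal to $0$'' step has no analogue. You must take the $0$--$1$ indicator of $y\{q_1,\dots,q_k\}$ (the set of states surviving $yz$) as the final vector, so that the weight is a path count bounded by $1$. Your appeal to \cref{lem-qi-no-distribution} as the ``structural input'' is also misplaced: that lemma is used later in \cref{lem-sc-main}, whereas the boundedness needed here comes from unambiguousness together with \cref{lem-qi-basic}.
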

\begin{proof}
Let $i \in \{1, \ldots, k\}$.
Since $y \{q_1, \ldots, q_k\}$ are the only states to survive $y z$, it suffices to compute $x \in \Sigma^*$ with $|x| \le n$ such that $q_i z x \cap y \{q_1, \ldots, q_k\} = \emptyset$.

Define $e \in \{0,1\}^Q$ as the characteristic row vector of $q_i z$, i.e., $e(q) = 1$ if and only if $q \in q_i z$.
Define $f \in \{0,1\}^Q$ as the characteristic column vector $y \{q_1, \ldots, q_k\}$.
First we show that for any $x \in \Sigma^*$ we have $e M(x) f \le 1$. %, where the superscript $\top$ denotes transpose.
Towards a contradiction suppose $e M(x) f \ge 2$.
Then there are two distinct $x$-labelled paths from $q_i z$ to $y \{q_1, \ldots, q_k\}$.
It follows that there are two distinct $z x y$-labelled paths from $q_i$ to $\{q_1, \ldots, q_k\}$.
By unambiguousness, these paths end in two distinct states $q_j, q_{j'}$. %  \in \{q_1, \ldots, q_k\}$.
But then $q_j, q_{j'}$ are coreachable, contradicting \cref{lem-qi-basic}.
Hence we have shown that $e M(x) f \le 1$ holds for all $x \in \Sigma^*$.

Define the (row) vector space
\[
V \ := \
\left\langle \begin{pmatrix} e M(x) & 1 \end{pmatrix} : x \in \Sigma^* \right\rangle \ \subseteq \ \R^{n+1}\;,
\]
i.e., $V$ is spanned by the vectors $\begin{pmatrix} e M(x) & 1 \end{pmatrix}$ for $x \in \Sigma^*$.
The vector space~$V$ can be equivalently characterized as the smallest vector space that contains $\begin{pmatrix} e & 1 \end{pmatrix}$ and is closed under multiplication with $\left(\begin{smallmatrix} M(a) & 0 \\ 0 & 1 \end{smallmatrix}\right)$ for all $a \in \Sigma$.
Hence the following algorithm computes a set $B \subseteq \Sigma^*$ such that
$
 \left\{\begin{pmatrix} e M(x) & 1 \end{pmatrix} : x \in B \right\}
$
is a basis of~$V$:

\begin{algorithmic}[1]
\STATE{$B := \{\varepsilon\}$}
\WHILE{$\exists\,u \in B,\,a \in \Sigma$ such that $\begin{pmatrix} e M(u a) & 1 \end{pmatrix} \not\in
  \left\langle \begin{pmatrix} e M(x) & 1 \end{pmatrix} : x \in B \right\rangle$}
\STATE{$B := B \cup \{u a\}$}
\ENDWHILE
\RETURN $B$ %$x \in B$ such that $e M(x) f = 0$
\end{algorithmic}
Observe that the algorithm performs at most $n$ iterations of the loop body, as every iteration increases the dimension of the space
$
 \left\langle \begin{pmatrix} e M(x) & 1 \end{pmatrix} : x \in B \right\rangle
$
by~$1$, but the dimension cannot grow larger than $n+1$.
Hence $|x| \le n$ holds for all $x \in B$.
Since $M(w_0) = 0$ holds for some $w_0 \in \Sigma^*$ and hence $e M(w_0) f = 0 \ne 1$, the space~$V$ is not orthogonal to $\left(\begin{smallmatrix} f \\ -1 \end{smallmatrix}\right)$.
So there exists $x \in B$ such that $e M(x) f \ne 1$.
Since $e M(x) f \le 1$, we have $e M(x) f = 0$.
Hence $q_i z x \cap y \{q_1, \ldots, q_k\} = \emptyset$.
\end{proof}

Now we can prove the following lemma, which is our main technical contribution:%
\begin{ourlemma} \label{lem-sc-main}
%Suppose there is $w_0 \in \Sigma^*$ with $M(w_0) = 0$ (this word~$w_0$ may not be given).
Suppose that $0 \in M(\Sigma^*)$.
One can compute in polynomial time a word $w \in \Sigma^*$ with $M(w) = 0$ and $|w| \le \ourbound$.
\end{ourlemma}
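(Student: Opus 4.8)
The plan is to build a killing word by concatenating the short words already constructed. Recall from \cref{lem-extender-all} the words $y,z$ and the states $q_1,\dots,q_k$ that are reached by~$y$ and survive~$z$, and from \cref{lem-UFA-strong-sense} (which applies since $0 \in M(\Sigma^*)$) the words $x_i$ with $q_i z\, x_i y z = \emptyset$ and $|x_i| \le n$. I would form
\[
 w \ = \ y z \,(x_{i_1} y z)(x_{i_2} y z) \cdots (x_{i_t} y z)
\]
for a suitable choice of indices $i_1,\dots,i_t$ with $t \le k$, and show that the set of surviving states becomes empty, i.e.\ $M(w)=0$.

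The starting observation is that reading $yz$ from all of~$Q$ gives exactly $Q y z = \bigcup_{i=1}^k q_i z$, because the states reached by~$y$ that survive~$z$ are precisely $q_1,\dots,q_k$. The driving invariant is that after reading a prefix ending in~$z$, the surviving set is contained in a union $\bigcup_{j \in A} q_j z$ of macro-states indexed by some $A \subseteq \{1,\dots,k\}$. By \cref{lem-qi-no-distribution} each block $x_i y z$ maps every macro-state into a single one, $q_j z\, x_i y z \subseteq q_{f(j)} z$ for some index $f(j)$ (or to~$\emptyset$), while by the defining property of~$x_i$ we have $q_i z\, x_i y z = \emptyset$. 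Hence for $i \in A$ the block $x_i y z$ sends $\bigcup_{j\in A} q_j z$ into $\bigcup_{j \in A \setminus\{i\}} q_{f(j)} z$, so the index set shrinks by at least one. Starting from $A=\{1,\dots,k\}$ and always picking~$i$ in the current index set, after at most $k$ blocks the index set, and hence the surviving set, is empty, so $w$ is killing. All ingredients are computable in polynomial time, so $w$ is too.

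It remains to bound $|w| = (t+1)(|y|+|z|) + \sum_s |x_{i_s}| \le (k+1)(|y|+|z|) + kn$. With $|y|+|z| \le \frac14(n+2)n(n-1)(c+m-2)$ from \cref{lem-extender-all}, the dominant term is $\frac14(n+2)n(n-1)\,(k+1)(c+m-2)$, so everything hinges on bounding $(k+1)(c+m-2)$; this is where the real work lies. The key inequality I would prove is $c+m+k \le n+3$. Choose a state and word realizing the maximum reach, giving a set $R$ of $c$ pairwise coreachable states, and similarly a set $L$ of $m$ pairwise mergeable states. By \cref{lem-qi-basic} the $q_i$ are independent in both relations, so $|R \cap \{q_1,\dots,q_k\}| \le 1$ and $|L \cap \{q_1,\dots,q_k\}| \le 1$. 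The crucial point, which is the main obstacle and the only place unambiguity enters here, is that no two distinct states can be \emph{simultaneously} coreachable and mergeable: if $p v \supseteq \{r,r'\}$ and $r v' \cap r' v' \ni s$, then $pvv'$ labels two distinct paths from~$p$ to~$s$, contradicting unambiguousness. Hence $|R \cap L| \le 1$, and Bonferroni's inequality applied to $R \cup L \cup \{q_1,\dots,q_k\} \subseteq Q$ gives $c + m + k - 3 \le n$.

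Finally I would substitute $c+m-2 \le n+1-k$ and maximize $(k+1)(n+1-k) \le \frac{(n+2)^2}{4}$ by the arithmetic--geometric mean inequality. This yields leading term $\frac14(n+2)n(n-1)\cdot\frac{(n+2)^2}{4} = \frac1{16}(n+2)^3 n(n-1)$, which together with the lower-order $kn \le n^2$ contribution stays within $\ourbound$. The combinatorial heart is thus the combined clique/independent-set bound $c+m+k \le n+3$; the surrounding chaining argument and the final optimization are routine.
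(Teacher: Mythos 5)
Your proposal is correct and follows essentially the same route as the paper's proof: the same chaining of blocks $x_i y z$ driven by \cref{lem-qi-no-distribution} and the defining property of the $x_i$ (the paper phrases the shrinking index set via a ``generator'' of the surviving set, but the argument is identical), and the same combinatorial bound $c+m+k \le n+3$ obtained from the pairwise-intersection estimates and inclusion--exclusion, followed by the same AM--GM optimization of $(k+1)(c+m-2)$.
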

\begin{proof}
For any $1 \le j < j' \le k$ the sets $q_j z$ and $q_{j'} z$ are disjoint by \cref{lem-qi-basic} and nonempty.
Hence any $P' \subseteq Q$ has at most one set $P \subseteq \{q_1, \ldots, q_k\}$ with $P z = P'$, which we call the \emph{generator} of~$P'$. % \CM{That is true only if $P'$ is equal to $\bigcup_{i \in J}q_jz$ with $J \subseteq \{1,...,k\}$}.
Note that all sets of the form $Q' y z$ where $Q' \subseteq Q$ have a generator.
For any $i \in \{1, \ldots, k\}$, let $x_i$ be the word from \cref{lem-UFA-strong-sense}, i.e., $q_i z x_i y z = \emptyset$.
By \cref{lem-qi-no-distribution}, for any $j \in \{1, \ldots, k\}$ the generator of $q_j z x_i y z$ has at most one element.
Thus, if $q_i \in P \subseteq \{q_1, \ldots, q_k\}$, then the generator, $P$, of $P z$ has strictly more elements than the generator of $P z x_i y z$.

Consider the following algorithm:

\begin{algorithmic}[1]
\STATE{$w := y z$}
\WHILE{$Q w \ne \emptyset$}
\STATE{$q_i := $ arbitrary element of the generator of $Q w$}
\STATE{$w := w x_i y z$}
\ENDWHILE
\RETURN $w$
\end{algorithmic}
It follows from the argument above that the size of the generator of $Q w$ decreases in every iteration of the loop.
Hence the algorithm terminates after at most $k$ iterations and computes a word~$w$ such that $Q w = \emptyset$ and, using \cref{lem-extender-all,lem-UFA-strong-sense},
\begin{align*} |w| &\ \le \ |y z| + k (n + |y z|) \ \le \ n^2 + (k+1)(|y|+|z|) \\ &\ \le \ n^2 + \frac14 (k+1) (c + m - 2) (n+2) n (n-1)\,.
\end{align*}

Let $q, q' \in Q$ and $u, u' \in \Sigma^*$ such that $c = |q u|$ and $m = |u' q'|$.
Clearly, $q u \cup u' q' \cup \{q_1, \ldots, q_k\} \subseteq Q$, and it follows from the inclusion-exclusion principle:
\[
 c + m + k \ \le \ n + |q u \cap u' q'| + |q u \cap \{q_1, \ldots, q_k\}| + |\{q_1, \ldots, q_k\} \cap u' q'|
\]
The sets $q u$ and $u' q'$ overlap in at most one state by unambiguousness.
The sets $q u$ and $\{q_1, \ldots, q_k\}$ overlap in at most one state by \cref{lem-qi-basic}, and similarly for $\{q_1, \ldots, q_k\}$ and $u' q'$.
It follows $c + m + k \le n + 3$, thus $(k+1) + (c+m-2) \le n+2$, hence $(k+1)(c+m-2) \le \frac14 (n+2)^2$.
With the bound on~$|w|$ from above we conclude that $|w| \le n^2 + \frac1{16} (n+2)^3 n (n-1)$, which is bounded by $\ourbound$ for $n \ge 1$.
%By \cref{lem-qi-basic} for any $q \in Q$ and $w \in \Sigma^*$ the sets $q w$ and $\{q_1, \ldots, q_k\}$ overlap in at most one state.
%Hence $c + k \le n+1$ and thus $(k+1)(c-1) \le \frac14 (n+1)^2$.
%Analogously we have $(k+1)(m-1) \le \frac14 (n+1)^2$.
%Using the bounds in \cref{lem-extender-all} on $|y|,|z|$ it follows that
% $|u| \le n^2 + \frac14 (n+1)^2 (n+2) (n-1)^2  \le \ourbound$ for $n \ge 1$. \CM{As the automaton is unambiguous, we have $c+m+k \leq n+3$ (so $(k+1) + (m+c-2) \leq n+2 $), because by taking $q,q',w,w'$ s.t. $|qw|=c$ and $|w'q'|=m$ we cannot have $|qw \bigcap w'q'| \geq 2$, otherwise there would be two paths labelled $w'w$ from $q'$ to $q$. Furthermore, $|qw \bigcap \{q_1,\ldots,q_k\}| \leq 1$ and $|qw \bigcap \{q_1,\ldots,q_k\}| \leq 1$. As a consequence, $(k+1)(|y| + |z|) \leq \frac12 (k+1)(m-1)(n+2)(n-1)^2 + \frac12 (k+1)(c-1)(n+2)(n-1)^2 = \frac12 (k+1)(m+c-2)(n+2)(n-1)^2 \leq \frac18 (n+2)^3(n-1)^2$, which gives a bound of $n^2 + \frac18 (n+2)^3(n-1)^2$ for the length of $u$}
\end{proof}

The following lemma, which rests on the properties of $y$ and~$z$, provides an alternative to the use of Carpi's \cref{thm-Carpi} in the proof of \cref{prop-sc}.
\begin{ourlemma} \label{lem-carpi-alt}
Suppose that $0 \not\in M(\Sigma^*)$.
Then $M(y z)$ has minimum rank in~$M(\Sigma^*)$ and this rank is~$k$.
\end{ourlemma}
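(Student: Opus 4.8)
The plan is to prove the two halves separately: the upper bound $\mathrm{rank}(M(yz)) = k$ by an explicit computation of the columns of $M(yz)$, and the matching lower bound $\mathrm{rank}(M(w)) \ge k$ for every $w$ by a semigroup argument on the ``waist'' states $q_1,\dots,q_k$.

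First I would compute $\mathrm{rank}(M(yz))$. Every $yz$-labelled path factors through a state that is reached by $y$ and survives $z$, i.e.\ through some $q_i$; and since the sets $q_1 z,\dots,q_k z$ are pairwise disjoint (\cref{lem-qi-basic}), a target $q\in q_i z$ pins down this intermediate state uniquely. Hence for $q\in q_i z$ the $q$-th column of $M(yz)$ is the characteristic vector of $y q_i$, so $M(yz)$ has exactly the $k$ distinct nonzero columns given by the sets $y q_1,\dots,y q_k$. These sets are nonempty and pairwise disjoint (\cref{lem-qi-basic}, via non-coreachability of the $q_i$), so their characteristic vectors are linearly independent and $\mathrm{rank}(M(yz)) = k$; in particular the minimum rank is at most~$k$.

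For the lower bound, write $M(yz) = UV$ where $U$ (columns $y q_i$) has full column rank~$k$ and $V$ (rows $q_i z$) has full row rank~$k$. Then $M(yzwyz) = U N_w V$ with $N_w := V M(w) U$, a $k\times k$ matrix, and since $U,V$ have full rank, $\mathrm{rank}(M(yzwyz)) = \mathrm{rank}(N_w)$. The entry $N_w(i,j)$ is $1$ if $q_j \in q_i z w y$ and $0$ otherwise. By \cref{lem-qi-basic}, $N_w$ has at most one $1$ per row (two would make two $q_j$'s coreachable through $q_i$) and at most one $1$ per column (two would make two $q_i$'s mergeable at $q_j$), so $N_w$ is a partial-permutation matrix and $\mathrm{rank}(N_w)$ equals the size of the domain of the partial injection $g_w$ of $\{1,\dots,k\}$ defined by $g_w(i)=j$ iff $q_j \in q_i z w y$. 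The crux is to show each $g_w$ is a \emph{total} permutation. I would first establish the composition law $g_{w_1 y z w_2} = g_{w_2}\circ g_{w_1}$: every $z w_1 y z w_2 y$-path between waist states passes through a waist state at the internal $y\mid z$ junction (that state is reached by $y$ and survives~$z$). Thus $G := \{g_w : w\in\Sigma^*\}$ is a subsemigroup of the partial injections of $\{1,\dots,k\}$. It avoids the empty map, because $g_w=\emptyset$ is equivalent to $M(yzwyz)=0$, i.e.\ to $0\in M(\Sigma^*)$, which is excluded; and it is transitive, since by strong connectivity one can route a state of $q_i z$ to a state of $y q_j$, realising $g_v(i)=j$ for any $i,j$. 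The key structural fact is then that a transitive subsemigroup of partial injections avoiding the empty map consists only of total permutations: taking $g_0\in G$ of minimal domain size~$d$ with image~$B$, minimality of $d$ forces every $h\in G$ to be defined on all of $B$ with $h(B)\subseteq\mathrm{dom}(g_0)$, and transitivity then forces $\mathrm{dom}(g_0) = \{1,\dots,k\}$, so $d=k$. Hence every $g_w$ is total, $\mathrm{rank}(N_w)=k$, and therefore $\mathrm{rank}(M(w)) \ge \mathrm{rank}(M(yzwyz)) = k$ by submultiplicativity of rank. Combined with the upper bound, $M(yz)$ attains the minimum rank, which is~$k$.

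I expect the main obstacle to be precisely this last step, the totality of $g_w$, and in particular the semigroup-theoretic fact that a transitive, empty-map-free semigroup of partial injections is a permutation group. The surrounding bookkeeping — verifying the composition law and the transitivity claim, both of which rely on \cref{lem-qi-basic} and strong connectivity — is the other delicate point, whereas the reduction $\mathrm{rank}(M(yzwyz)) = \mathrm{rank}(N_w)$ and the upper bound are routine.
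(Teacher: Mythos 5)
Your proof is correct, but the lower-bound half takes a genuinely different route from the paper's. The upper bound (rank of $M(yz)$ equals $k$) is the same argument up to transposition: the paper reads off the \emph{rows} of $M(yz)$ as characteristic vectors of the disjoint nonempty sets $q_iz$, you read off the \emph{columns} as characteristic vectors of the sets $yq_i$. For the lower bound, the paper argues by contraposition and stays entirely inside the automaton: if some $M(x)$ had rank below $k$, then (by the same disjointness of the sets $q_izx$) some $q_izx$ would be empty, and then any surviving state $p u$ can be killed by routing it via strong connectivity and $y$ onto $q_i$ and applying $zx$; iterating kills all of $Q$ and yields $0\in M(\Sigma^*)$, a contradiction. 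Your argument instead compresses the dynamics between the ``waist'' states into the $k\times k$ matrices $N_w=VM(w)U$, verifies they are partial-permutation matrices with a composition law $g_{w_1yzw_2}=g_{w_2}\circ g_{w_1}$, and invokes the structural fact that a transitive, empty-map-free semigroup of partial injections consists of total permutations. I checked the delicate points: the factorization $M(yz)=UV$ is valid because every non-waist state is either not reached by $y$ or does not survive $z$; the composition law holds because the intermediate state at the internal $y\mid z$ junction is necessarily a waist state; and your sketch of the structural fact closes correctly (minimality of $d$ gives $B\subseteq\mathrm{dom}(h)$ and $h(B)\subseteq\mathrm{dom}(g_0)$ for all $h$ by considering $h\circ g_0$ and $g_0\circ h\circ g_0$, and transitivity then forces $\mathrm{dom}(g_0)=B=\{1,\dots,k\}$). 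What your approach buys is structural insight --- it essentially exhibits the minimal ideal of the unambiguous monoid as a permutation group on the waist states, in the spirit of the classical theory of unambiguous monoids of relations --- at the cost of roughly a page of semigroup bookkeeping; the paper's contrapositive argument is shorter, more elementary, and reuses the killing-word machinery already set up for \cref{lem-sc-main}. One point worth making explicit in either version: $k\ge 1$, since $k=0$ would force $M(yz)=0$, contradicting $0\notin M(\Sigma^*)$.
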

\begin{proof}
%By \cref{lem-qi-basic}, for all $w \in \Sigma^*$ and all $1 \le i < j \le k$ we have $q_i z w \cap q_j z w = \emptyset$.
It follows from \cref{lem-qi-basic} that each row of~$M(y z)$ is either the zero vector or the characteristic vector of some $q_i z$.
As the sets $q_i z$ for $i \in \{1, \ldots, k\}$ are nonempty and pairwise disjoint, it follows that $M(y z)$ has rank~$k$.

Suppose $x \in \Sigma^*$ is such that $M(x)$ has rank less than~$k$.
Then $M(y z x)$ has rank less than~$k$.
Since the sets $q_i z x$ for $i \in \{1, \ldots, k\}$ are pairwise disjoint, there is $i \in \{1, \ldots, k\}$ such that $q_i z x = \emptyset$.
In order to show that $0 \in M(\Sigma^*)$ it suffices to show that for all $p \in Q$ and all $u \in \Sigma^*$ there is $w \in \Sigma^*$ such that $p u w = \emptyset$.
Let $p \in Q$ and $u \in \Sigma^*$.
If $p u = \emptyset$ then choose $w = \varepsilon$.
Otherwise, let $v \in \Sigma^*$ be such that $p u v y \ni q_i$.
By \cref{lem-qi-basic}, we have $p u v y \cap \{q_1, \ldots, q_k\} = \{q_i\}$.
Thus $p u v y z = q_i z$ and $p u v y z x = q_i z x = \emptyset$.
Hence choose $w = v y z x$.
\end{proof}

To prove \cref{prop-sc} we combine \cref{lem-sc-main} with either Carpi's \cref{thm-Carpi} or \cref{lem-carpi-alt}.
\propsc*
\begin{proof}
One can check in polynomial time whether %there is $w_0 \in \Sigma^*$ with \mbox{$M(w_0) = 0$},
$0 \in M(\Sigma^*)$,
see \cref{prop-mortality}.
If yes, then the minimum rank is~$0$, and \cref{lem-sc-main} gives the result.
Otherwise, $0 \not\in M(\Sigma^*)$, and \cref{lem-extender-all,lem-carpi-alt} give the result.

In the case $0 \not\in M(\Sigma^*)$ one may alternatively use Carpi's \cref{thm-Carpi}:
Indeed, the minimum rank~$r$ is between $1$ and~$n$, and hence $n \ge 1$.
\cref{thm-Carpi} asserts the existence of a word~$w$ such that $M(w)$ has rank~$r$ and $|w| \le \frac12 n^4 - n^3 + \frac52 n^2 - 3 n + 1$, which is bounded by $\ourbound$ for $n \ge 1$.
An inspection of Carpi's proof~\cite{Carpi88} shows that his proof is constructive and can be transformed into an algorithm that computes~$w$ in polynomial time.
\end{proof}

\subsection{Not Necessarily Strongly Connected} \label{sub-nsc}

We prove \cref{thm-main}:
%\thmmain*
%In terms of the previous notions in the proof we can rephrase \cref{thm-main} as follows:
\begin{ourtheorem}[continues=thm-main]
\stmtthmmainrephrased
\end{ourtheorem}
\begin{proof}
For any matrix~$A$ denote by $\rk(A)$ its rank.
For $i,j \in \{1, \ldots, n\}$ write $i \to j$ if there is $u \in \Sigma^*$ such that $M(u)(i,j) > 0$, and write $i \leftrightarrow j$ if $i \to j$ and $j \to i$.
The relation ${\leftrightarrow}$ is an equivalence relation.
Denote by $C_1, \ldots, C_h \subseteq \{1, \ldots, n\}$ its equivalence classes ($h \le n$).
We can assume that whenever $i \in C_k$ and $j \in C_\ell$ and $i \to j$, then $k \le \ell$.
Hence, without loss of generality, $M(u)$ for any $u \in \Sigma^*$ has the following block-upper triangular form:
\[
M(u) \ = \ \begin{pmatrix}
M_{11}(u) & M_{12}(u) & \cdots & M_{1 h}(u) \\
  0       & M_{22}(u) & \cdots & M_{2 h}(u) \\
%  0       & 0         & \ddots & \vdots     \\
  \vdots  & \vdots    & \ddots & \vdots     \\
  0       & 0         & \cdots  & M_{h h}(u)
\end{pmatrix}\;,
\]
where $M_{i i}(u) \in \N^{|C_i| \times |C_i|}$ for all $i \in \{1, \ldots, h\}$.
For $i \in \{1, \ldots, h\}$ define $r_i := \min_{u \in \Sigma^*} \rk(M_{i i}(u))$.
For any $u \in \Sigma^*$ we have $\rk(M(u)) \ge \sum_{i=1}^h \rk(M_{i i}(u))$ (see, e.g., \cite[Chapter 0.9.4]{HornJohnson13}).
%So we have $\min_{u \in \Sigma^*} \rk(M(u)) \ge \sum_{i=1}^h r_i$.
It follows that the minimum rank among the matrices in $M(\Sigma^*)$ is at least $\sum_{i=1}^h r_i$.

Let $w_1, \ldots, w_h \in \Sigma^*$ be the words from \cref{prop-sc} for $M_{1 1}, \ldots, M_{h h}$, respectively, so that $\rk(M_{i i}(w_i)) = r_i$ holds for all $i \in \{1, \ldots, h\}$.
Define $w := w_1 \cdots w_h$.
Then we have:
\[
|w| \ \le \ \sum_{i=1}^h |w_i| \ \le \ \sum_{i=1}^h \frac1{16} |C_i|^5 + \frac{15}{16} |C_i|^4 \ \le \ \ourbound
\]

It remains to show that $\rk(M(w)) \le \sum_{i=1}^h r_i$.
%For all $i \in \{1, \ldots, h\}$ we have
%\begin{align*}
% r_i(w_i) &\ = \ \rk(M_{i i}(w_i)) \ \ge \ \rk(M_{i i}(w_1) \cdots M_{i i}(w_1) \cdots M_{i i}(w_h)) \ = \ \rk(M_{i i}(w)) \ = \ r_i(w) \\
%  &\ \ge \ r_i \ = \ r_i(w_i) \;,
%\end{align*}
%i.e., $r_i(w) = r_i$.
It suffices to prove that $\rk(M_k(w_1 \cdots w_k)) \le \sum_{i=1}^k r_i$ holds for all $k \in \{1, \ldots, h\}$, where $M_k(u)$ for any $u \in \Sigma^*$ is the principal submatrix obtained by restricting $M(u)$ to the rows and columns corresponding to $\bigcup_{i=1}^k C_i$.
We proceed by induction on~$k$.
For the base case, $k=1$, we have $\rk(M_1(w_1)) = \rk(M_{1 1}(w_1)) = r_1$.
For the induction step, let $1 < k \le h$.
Then there are matrices $A_1, A_2, B_1, B_2$ such that:
\begin{align}
  M_k(w_1 \cdots w_k)
  &\ = \ M_k(w_1 \cdots w_{k-1}) M_k(w_k) \nonumber\\
  &\ = \ \begin{pmatrix} M_{k-1}(w_1 \cdots w_{k-1}) & A_1 \\ 0 & A_2 \end{pmatrix}
         \begin{pmatrix} B_1 & B_2 \\ 0 & M_{k k}(w_k) \end{pmatrix} \nonumber\\
  &\ = \ \begin{pmatrix} M_{k-1}(w_1 \cdots w_{k-1}) \\ 0 \end{pmatrix}
         \begin{pmatrix} B_1 & B_2 \end{pmatrix} +
         \begin{pmatrix} A_1 \\ A_2 \end{pmatrix}
         \begin{pmatrix} 0 & M_{k k}(w_k) \end{pmatrix} \label{eq-product-sum}
\end{align}
By the induction hypothesis, we have $\rk(M_{k-1}(w_1 \cdots w_{k-1})) \le \sum_{i=1}^{k-1} r_i$.
Further, we have $\rk(M_{k k}(w_k)) = r_k$.
So the ranks of the two summands in~\eqref{eq-product-sum} are at most $\sum_{i=1}^{k-1} r_i$ and $r_k$, respectively.
Since for any matrices $A, B$ it holds $\rk(A+B) \le \rk(A) + \rk(B)$, we conclude that $\rk(M_k(w_1 \cdots w_k)) \le \sum_{i=1}^k r_i$, completing the induction proof.
\end{proof}

\section{Proof of \texorpdfstring{\cref{thm-restivo}}{Theorem~\ref{thm-restivo}}} \label{sec-proof-restivo}

\thmrestivo*

%If $k \le 1$, any one-letter word from $\Sigma \setminus X$ is uncompletable.
%In the following we assume $k > 1$.
Consider the flower automaton associated to~$X$, which is a UFA $\A = (\Sigma, Q, \delta)$ with $n = |Q| = m - |X| + 1$ states.
The uncompletable words in~$X$ are exactly the killing words in~$\A$.
Towards a proof of \cref{thm-restivo} we first focus on computing a short killing word in~$\A$.

To this end we optimize the construction from \cref{sub-sc} for flower automata.
Denote by $0 \in Q$ the ``central'' state of~$\A$ around which the petals are built.
For each $q \in Q$ fix a word $u_q \in \Sigma^*$ such that $q u_q = \{0\}$ and $|u_q| \le k-1$.
The following lemma bounds the size of certain sets of states that survive long words:
\begin{ourlemma} \label{lem-restivo-basic-lemma}
Let $w \in \Sigma^*$ with $|w| \ge k-1$.
Then for all $p \in Q$ and all $v \in \Sigma^*$ we have $|p v \cap w Q| \le k$, i.e., at most $k$~states of $p v$ survive~$w$.
\end{ourlemma}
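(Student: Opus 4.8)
The plan is to use the special geometry of the flower automaton, in which every non-central state is internal to a unique petal and has a completely \emph{deterministic} forward continuation back to the central state~$0$. Accordingly I would fix the return words as follows: set $u_0 := \varepsilon$, and for an internal state~$q$ sitting at position~$i$ of the petal for some $x \in X$ let $u_q$ be the remaining suffix $a_{i+1} \cdots a_{|x|}$ of~$x$, so that $|u_q| = |x| - i \le k-1$ and, since the forward path is forced, $q u_q = \{0\}$. The crucial observation is then: \emph{if a state~$q$ survives~$w$, then $u_q$ is a prefix of~$w$}. For $q = 0$ this is trivial. For an internal~$q$, the only transition leaving~$q$ is labelled by the next letter of its petal, so any $w$-labelled path out of~$q$ is forced to read $u_q$ letter by letter until it first reaches~$0$; as $|u_q| \le k-1 \le |w|$, these letters are an initial segment of~$w$. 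This is precisely the step that uses the hypothesis $|w| \ge k-1$.

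Granting this, I would bound $|p v \cap w Q|$ by counting. Every $s \in p v \cap w Q$ survives~$w$, so $u_s$ is a prefix of~$w$ of length at most $k-1$, and there are only $k$ such prefixes (one of each length $0, 1, \ldots, k-1$). It thus suffices to show that $s \mapsto u_s$ is injective on $p v \cap w Q$. Suppose $s \ne s'$ both lie in this set with $u_s = u_{s'} =: u$. There are $v$-labelled paths from~$p$ to~$s$ and to~$s'$, and (since $s u = s' u = \{0\}$) $u$-labelled paths from each of~$s, s'$ to~$0$; concatenating gives two $v u$-labelled paths from~$p$ to~$0$ that differ in the state reached after~$v$. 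This contradicts unambiguousness, so the map is injective and $|p v \cap w Q| \le k$.

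I expect the first step to be the only real obstacle, because it is the sole place where the flower structure is genuinely exploited: the deterministic continuation inside a petal and the length bound~$k$ together force $u_q$ to appear as a prefix of any word~$w$ that~$q$ survives. The subsequent injectivity argument is a routine application of unambiguousness, entirely analogous to the path-counting used in \cref{lem-qi-basic} and \cref{lem-UFA-strong-sense}. The one point requiring care is the choice $u_0 = \varepsilon$: any other return word for~$0$ (for instance a self-loop petal) would in general fail to be a prefix of~$w$, so the central state must be handled separately, and with $u_0 = \varepsilon$ it satisfies the prefix property vacuously while the injectivity argument still distinguishes it from every internal state (whose return word has positive length).
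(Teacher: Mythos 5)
Your proposal is correct and takes essentially the same approach as the paper: both arguments rest on the return words $u_q$ of length at most $k-1$, the observation that $|w|\ge k-1$ forces $u_q$ to be a prefix of~$w$ for every state~$q$ surviving~$w$, and the fact that two distinct states of $p v$ mapped to~$0$ by the same word would violate unambiguousness. The only cosmetic difference is that you phrase the count as injectivity of $s \mapsto u_s$ into the $k$ short prefixes of~$w$, whereas the paper runs the equivalent pigeonhole argument on the lengths $|u_q|$ for contradiction.
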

\begin{proof}
Towards a contradiction, suppose $|p v \cap w Q| > k$.
By the pigeonhole principle, there are two different states $q_1, q_2$ with $q_1, q_2 \in p v \cap w Q$ such that $|u_{q_1}| = |u_{q_2}|$.
Since $q_1, q_2$ both survive $w$, where $|w| \ge k-1$, both $u_{q_1}$ and $u_{q_2}$ are prefixes of~$w$.
Hence $u_{q_1} = u_{q_2}$.
It follows that $q_1 u_{q_1} = \{0\} = q_2 u_{q_1}$, i.e., $q_1, q_2$ are mergeable.
But $q_1, q_2 \in p v$ are also coreachable, contradicting unambiguousness.
\end{proof}

The following lemma adapts \cref{lem-extender}:

\begin{ourlemma} \label{lem-extender-restivo}
For each $q \in Q$ one can compute in polynomial time a word $w_q \in \Sigma^*$ with $|w_q| \le \frac12 (k-1) (n+2) n$ such that no state $q' \ne q$ survives~$w_q$ and is coreachable with~$q$.
\end{ourlemma}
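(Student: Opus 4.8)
The plan is to reuse verbatim the greedy \emph{extender} algorithm from the proof of \cref{lem-extender}: start with $w := \varepsilon$ and, as long as there is a state $q' \in C(q)$ with $q' \ne q$ that survives the current word~$w$, prepend a short coreachability witness by setting $w := w_{q,q'}\, w$, where $w_{q,q'}$ is the word from \cref{lem-product-automaton} with $q\, w_{q,q'} \supseteq \{q,q'\}$; the returned word is $w_q := w$. Its defining property, that no state $q' \ne q$ coreachable with~$q$ survives~$w_q$, is just the termination condition of the loop. The correctness argument is identical to that of \cref{lem-extender}: since $q\, w_{q,q'}\, w \supseteq q w \cup q' w$, with $q w$ and $q' w$ disjoint by unambiguousness and $q' w \ne \emptyset$ because $q'$ survives~$w$, every iteration strictly enlarges the set~$qw$, so the loop terminates.

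The only point that genuinely differs from \cref{lem-extender} is the \emph{number of iterations}, and this is exactly where the flower structure enters through \cref{lem-restivo-basic-lemma}. In the general case the loop count was bounded by $c-1$, which may be as large as $n-1$; for flower automata I would instead bound it by $k-1$. The idea is that, once $|w| \ge k-1$, the states that both survive~$w$ and belong to a single coreachable family are few: by \cref{lem-restivo-basic-lemma}, at most $k$ states of any set of the form~$p v$ survive such a word. Since each prepended witness places $q$ and the targeted~$q'$ into the common reachable set $q\, w_{q,q'}$, the successive surviving states targeted by the loop can be organised into one such family $p v$; the cap of~$k$ then limits the number of distinct survivors, and hence the number of iterations, to at most $k-1$.

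Combining the two ingredients yields the length estimate. Each witness satisfies $|w_{q,q'}| \le \frac12 (n+2)(n-1)$ by \cref{lem-product-automaton}, so with at most $k-1$ iterations we obtain $|w_q| \le \frac12 (k-1)(n+2)(n-1) \le \frac12 (k-1)(n+2)n$, as required. Polynomial-time computability is inherited from \cref{lem-product-automaton} together with the polynomial-time checkability of coreachability and survival, exactly as in \cref{lem-extender}.

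I expect the main obstacle to be the iteration bound itself—turning the qualitative cap ``at most~$k$ coreachable survivors'' from \cref{lem-restivo-basic-lemma} into a clean bound of $k-1$ on the loop count. The subtlety is that \cref{lem-restivo-basic-lemma} limits survivors only within a \emph{single} set~$p v$, whereas $C(q)$ need not be of this form; the witnesses $w_{q,q'}$, which force $q$ and each targeted~$q'$ into the common set $q\, w_{q,q'}$, are precisely what should let one collect the successive survivors into one coreachable family and rule out their replenishment. I would also handle the transient phase while $|w| < k-1$ separately, since the threshold hypothesis $|w| \ge k-1$ of \cref{lem-restivo-basic-lemma} only becomes available after the prepended word has grown long enough.
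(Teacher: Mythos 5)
You have the right algorithm (the greedy extender of \cref{lem-extender}), the right tool (\cref{lem-restivo-basic-lemma}), and the right target ($k-1$ iterations), but the step you yourself flag as the main obstacle is a genuine gap, and the plan you sketch for closing it does not work as stated. \cref{lem-restivo-basic-lemma} bounds $|p v \cap w Q|$ only for a \emph{single} set of the form $p v$ and a \emph{single} word $w$ of length at least $k-1$ that all the counted states survive. The states $q_1, \ldots, q_\ell$ targeted by the loop satisfy neither condition: they are each coreachable with~$q$ but need not lie in a common set $p v$ (coreachability with $q$ is only pairwise), and $q_i$ is only guaranteed to survive the word current at iteration~$i$, not any common long word. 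Starting from $w := \varepsilon$ there is no fixed word of length at least $k-1$ available to play the role of $w$ in \cref{lem-restivo-basic-lemma}. Your fallback of ``handling the transient phase separately'' also costs too much: if you wait until the first iteration~$j$ at which $|w| \ge k-1$ and only then count survivors of that suffix, you get $\ell \le j + (k-1) \le 2(k-1)$ (since each witness has length at least~$1$), which overshoots the claimed length bound by roughly a factor of two.

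The paper's fix is a small but essential change to the algorithm itself: initialize $w$ not to $\varepsilon$ but to some $\winit \in \Sigma^{k-1}$ with $q \winit \ne \emptyset$ (such a word exists because $q$ lies on a petal). Since witnesses are \emph{prepended}, $\winit$ stays a suffix forever, so after $\ell$ iterations $w = w_{q,q_\ell} \cdots w_{q,q_1} \winit$. One then counts not the states $q_i$ but the states of $q\, w_{q,q_\ell} \cdots w_{q,q_1}$ --- which \emph{is} a set of the form $p v$ --- that survive~$\winit$: this count starts at~$1$ (namely $q$ itself), strictly increases each iteration because $q\, w_{q,q_i} \cdots w_{q,q_1} \winit \supsetneq q\, w_{q,q_{i-1}} \cdots w_{q,q_1} \winit$ forces a new survivor of~$\winit$, and is capped at~$k$ by \cref{lem-restivo-basic-lemma}. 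Hence $\ell \le k-1$, and the extra $k-1$ letters of $\winit$ are exactly what is absorbed when the bound $\frac12 (k-1)(n+2)(n-1) + (k-1)$ is relaxed to $\frac12 (k-1)(n+2)n$ --- which, incidentally, is why the lemma states $n$ where your accounting produces $n-1$. Everything else in your proposal (correctness, termination via the strictly growing set $q w$, and polynomial-time computability) matches the paper.
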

\begin{proof}
We use the same algorithm as in the proof of \cref{lem-extender}, except that we initialize $w$ not to $\varepsilon$ but to some $\winit \in \Sigma^{k-1}$ with $q \winit \ne \emptyset$.
Such $\winit$ exists, as $q$ is on some cycle.
Let $\ell \in \N$ be the number of iterations of the loop in the algorithm.
The computed word $w_q$ has the form $w_{q,q_\ell} \cdots w_{q,q_1} \winit$ for some states $q_1, \ldots, q_\ell \in Q$.
It follows from the proof of \cref{lem-extender} that for all $i \in \{1, \ldots, \ell\}$ we have $q w_{q,q_i} \cdots w_{q,q_1} \winit \supsetneq q w_{q,q_{i-1}} \cdots w_{q,q_1} \winit$.
Hence also $q w_{q,q_i} \cdots w_{q,q_1} \cap \winit Q \supsetneq q w_{q,q_{i-1}} \cdots w_{q,q_1} \cap \winit Q$.
Since $q \in \winit Q$, it follows that $|q w_{q,q_\ell} \cdots w_{q,q_1} \cap \winit Q| \ge \ell + 1$.
By \cref{lem-restivo-basic-lemma} it follows $\ell + 1 \le k$.
Hence, using \cref{lem-product-automaton} we obtain $|w_q| = |w_{q,q_\ell} \cdots w_{q,q_1} \winit| \le \frac12 (k-1) (n+2) (n-1) + (k-1) \le \frac12 (k-1) (n+2) n$.
\end{proof}

The following lemma adapts \cref{lem-extender-all}:

\begin{ourlemma} \label{lem-extender-all-restivo}
One can compute in polynomial time words $z, y \in \Sigma^*$ such that:
\begin{itemize}
\item $|z| \le \frac12 k (k-1) (n+2) (n+1)$ and there are no two coreachable states that both survive~$z$;
\item $|y| \le \frac12 k (k-1) (n+2) (n+1)$ and there are no two mergeable   states that are both reached by~$y$.
\end{itemize}
\end{ourlemma}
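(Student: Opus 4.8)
The plan is to reuse, essentially verbatim, the two dual algorithms from the proof of \cref{lem-extender-all}, but to feed them the shorter extender words $w_q$ produced by \cref{lem-extender-restivo} in place of those from \cref{lem-extender}. Concretely, for the first item I would run the loop that repeatedly picks a pair of coreachable states $p,p'$ that both survive the current word~$w$, chooses some $q \in p w$, and right-appends $w_q$; the second item is then obtained by the symmetric construction on the reversed flower automaton (which is again a flower automaton), so it suffices to treat the first item. The per-iteration correctness argument of \cref{lem-extender-all} — that appending $w_q$ kills every state of $p'' w$ for $p'' \in C(p)$, hence removes at least the two states $p,p'$ from the set $B$ of ``non-isolated'' survivors — is a generic UFA argument and carries over unchanged. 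By \cref{lem-extender-restivo} each appended block satisfies $|w_q| \le \frac12 (k-1)(n+2) n$, so all that must be redone is the bound on the number of iterations.

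In \cref{lem-extender-all} the iteration count was $\frac n2$, coming from $|B| \le n$ and a loss of two per step; plugging this in here would give an $O(k n^3)$ word, far worse than the target $\frac12 k(k-1)(n+2)(n+1) = O(k^2 n^2)$. The target forces the iteration count down to at most~$k$, and this is exactly where the flower structure enters through \cref{lem-restivo-basic-lemma}. The structural facts I would exploit are: (i) once $|w| \ge k-1$ (which holds after the first appended block) every surviving state~$q$ has its forced path $u_q$ as a prefix of~$w$, giving it a well-defined \emph{level} $|u_q| \in \{0,\dots,k-1\}$; (ii) two coreachable survivors are never mergeable, so they always sit at distinct levels; and (iii) by \cref{lem-restivo-basic-lemma} any cluster $0 v \cap w Q$ of mutually coreachable survivors has at most $k$ elements, one per level. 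The aim is to turn these into a potential argument in which each iteration permanently retires one of the at most $k$ levels from the coreachability relation, so that the loop terminates after at most $k$ passes. Granting this, $|z| \le k \cdot \frac12(k-1)(n+2) n \le \frac12 k(k-1)(n+2)(n+1)$, the bound for~$y$ follows by duality, and polynomial-time computability is inherited from \cref{lem-product-automaton,lem-extender-restivo}.

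The main obstacle is precisely the iteration bound, i.e.\ upgrading the generic ``$B$ loses two states per step'' estimate to ``at most $k$ iterations''. The difficulty is that a level records the \emph{common future} of survivors (the forced path $u_q$ into the central state), whereas coreachability records their \emph{common past}, and these two are a priori independent: clearing all coreachable partners of one level-$c$ survivor need not clear those of another survivor at the same level, since mergeability gives no control over pasts. I expect the resolution to require choosing the witnessing pair $(p,p')$ deliberately — for instance always taking $p$ of minimal active level and arguing, via \cref{lem-restivo-basic-lemma} applied to the cluster through $q$, that after appending $w_q$ no surviving state at that level can retain a coreachable partner — thereby retiring the levels one at a time. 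Verifying this monotonicity, rather than the routine length bookkeeping, is where the real work lies.
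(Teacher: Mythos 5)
Your proposal sets up the right algorithm (append the $w_q$ of \cref{lem-extender-restivo} instead of those of \cref{lem-extender}), correctly locates the crux in cutting the iteration count to~$k$ via the levels $|u_q|$, and handles the length bookkeeping and the duality for~$y$. But you explicitly stop short of proving the one claim everything depends on --- that each iteration permanently retires a level --- and the obstacle you flag (``clearing the coreachable partners of one level-$c$ survivor need not clear those of another survivor at the same level'') dissolves once the flower structure is used slightly harder: same-level survivors of a long word do not merely share a future \emph{eventually}, they have \emph{identical} images under the current word. Initialize $w := \winit$ for an arbitrary $\winit \in \Sigma^{k-1}$, so that $|w| \ge k-1$ already at the first test of the loop guard. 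If $p$ is the state picked in some iteration and $\bar p$ is any state with $|u_{\bar p}| = |u_p|$ surviving the current~$w$, then $u_p$ and $u_{\bar p}$ are prefixes of~$w$ of equal length, hence equal; writing $w = u_p w'$, both paths are funnelled through the central state, so $p w = 0\, w' = \bar p w$, and in particular $q \in \bar p w$ for the state $q$ chosen from $p w$. Now for any $\bar p'$ coreachable with~$\bar p$, every state of $\bar p' w$ is coreachable with~$q$ (and distinct from~$q$ by unambiguousness), so $\bar p' w w_q = \emptyset$ by the defining property of~$w_q$. Hence no state at level $|u_p|$ can ever again occur in a coreachable surviving pair, and the loop runs at most $k$ times --- for an \emph{arbitrary} choice of~$p$; the minimal-level selection rule you anticipate is not needed.

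Two smaller points. Without the $\winit$ initialization your count slips to $k+1$ iterations (the first pass, with $w=\varepsilon$, retires no level because the prefix argument needs $|w| \ge k-1$), which overshoots the stated bound; the $\winit$ prefix costs only the additive $k-1$ absorbed into $\frac12 k(k-1)(n+2)(n+1)$. Your claim (ii), that coreachable survivors sit at distinct levels, is correct but is a symptom of the stronger fact above rather than the tool that closes the argument. As it stands, the proposal is a correct plan with the decisive step left as a conjecture.
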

\begin{proof}
We use the same algorithm as in the proof of \cref{lem-extender-all}, except that we initialize $w$ not to $\varepsilon$ but to an arbitrary $\winit \in \Sigma^{k-1}$, and that for~$w_q$ we use \cref{lem-extender-restivo} instead of \cref{lem-extender}:
\begin{algorithmic}[1]
\STATE{$w := \winit$}
\WHILE{there are coreachable $p, p'$ that both survive~$w$}
\STATE{$q := $ arbitrary state from $p w$}
\STATE{$w := w w_q$ (with $w_q$ from \cref{lem-extender-restivo})}
\ENDWHILE
\RETURN $z := w$
\end{algorithmic}
Consider a state $p \in Q$ picked in some iteration of the loop, i.e., $p$ survives the (current) word~$w$.
%By the claim in the proof of \cref{lem-extender-all} the state $p$ will not be picked again in any future iteration.
We claim that no state $\bar{p}$ with $|u_{\bar{p}}| = |u_p|$ will be picked in any future iteration.
Indeed, let $\bar{p} \in Q$ be with $|u_{\bar{p}}| = |u_p|$ such that $\bar{p}$ survives a future~$w$.
Then $\bar{p}$ survives the current~$w$.
Since $|u_p| = |u_{\bar{p}}|$ and $p, \bar{p}$ both survive~$w$ with $|w| \ge k-1$, we have $u_p = u_{\bar{p}}$ and this word is a prefix of~$w$.
It follows that $p w = \bar{p} w$, thus $q \in \bar{p} w$, where $q$ is the state from line~3.
Suppose $\bar{p}'$ is an arbitrary state that is coreachable with~$\bar{p}$.
Then the states in $\bar{p}' w$ are coreachable with~$q$.
Thus, $\bar{p}' w w_q = \emptyset$ and so $\bar{p}'$ does not survive any future~$w$.
It follows that $\bar{p}$ will not be picked in any future iteration.

Since for all $p \in Q$ we have $|u_{p}| \in \{0, \ldots, k-1\}$, the algorithm performs at most $k$ loop iterations.
Hence, using \cref{lem-extender-restivo}, the computed word~$z$ has length at most $k \cdot \frac12 (k-1) (n+2) n + (k-1) \le \frac12 k (k-1) (n+2) (n+1)$.
The argument for~$y$ is similar.
\end{proof}

The following lemma adapts \cref{lem-sc-main}:

\begin{ourlemma} \label{lem-sc-main-restivo}
One can compute in polynomial time a killing word of length at most $(k+1) k^2 (n+2) (n+1)$.
\end{ourlemma}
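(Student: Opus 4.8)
The plan is to mimic the proof of \cref{lem-sc-main}, replacing the inclusion–exclusion count (which produced a bound governed by~$n$) with a sharper count, specific to flower automata, that limits the relevant number of states to~$k$. Throughout I would use the words $y,z$ from \cref{lem-extender-all-restivo}. Since these satisfy exactly the two properties ``no two coreachable states survive~$z$'' and ``no two mergeable states are reached by~$y$'' on which the proofs of \cref{lem-qi-basic,lem-qi-no-distribution,lem-UFA-strong-sense} rely, those three lemmas apply here verbatim. Note also that the flower automaton is strongly connected (every petal runs through the central state~$0$) and, since $X$ has an uncompletable word, it has a killing word, so the hypothesis $0 \in M(\Sigma^*)$ of \cref{lem-UFA-strong-sense} is met; in particular \cref{lem-UFA-strong-sense} supplies, for each relevant state~$q_i$, a word $x_i$ with $|x_i| \le n$ and $q_i z x_i y z = \emptyset$.

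First I would let $q_1, \ldots, q_K$ be the states reached by~$y$ and surviving~$z$, as in \cref{fig-q1k}. The crucial new step is to show $K \le k$. Each $q_i$ survives~$z$, and $|z| \ge k-1$ because the algorithm of \cref{lem-extender-all-restivo} initializes~$w$ to a word in $\Sigma^{k-1}$; hence, exactly as in the proof of \cref{lem-restivo-basic-lemma}, the word $u_{q_i}$ leading from $q_i$ to~$0$ is a prefix of~$z$. If two distinct $q_i, q_j$ had $|u_{q_i}| = |u_{q_j}|$, then $u_{q_i}$ and $u_{q_j}$ would be prefixes of~$z$ of equal length, hence equal, so that $q_i u_{q_i} = \{0\} = q_j u_{q_i}$, i.e., $q_i, q_j$ would be mergeable, contradicting \cref{lem-qi-basic}. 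Thus the values $|u_{q_i}| \in \{0, \ldots, k-1\}$ are pairwise distinct, giving $K \le k$. This is the main obstacle and the only genuinely new idea beyond \cref{lem-sc-main}.

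With $K \le k$ in hand, I would run the algorithm from the proof of \cref{lem-sc-main}: start from $w := y z$ and, while $Q w \ne \emptyset$, pick a state $q_i$ from the generator of $Q w$ and append $x_i y z$. Exactly as in \cref{lem-sc-main}, \cref{lem-qi-no-distribution} forces the generator of $Q w$ to shrink strictly in each iteration (using $q_i z x_i y z = \emptyset$), so the loop runs at most $K \le k$ times and terminates with $Q w = \emptyset$, that is, with $w$ a killing word.

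Finally I would bound the length. Each iteration adds $|x_i| + |y z| \le n + |yz|$ letters, so $|w| \le |yz| + k(n + |yz|) = (k+1)|yz| + kn$. Plugging in $|y| + |z| \le k(k-1)(n+2)(n+1)$ from \cref{lem-extender-all-restivo} gives $|w| \le (k+1)k(k-1)(n+2)(n+1) + kn$. Since $(k+1)k^2(n+2)(n+1) - (k+1)k(k-1)(n+2)(n+1) = (k+1)k(n+2)(n+1)$, which exceeds the leftover term $kn$, we conclude $|w| \le (k+1) k^2 (n+2)(n+1)$, as required. Everything beyond the bound $K \le k$ is a direct transcription of the strongly connected argument, now fed with the Restivo-specific length estimates for~$y$ and~$z$.
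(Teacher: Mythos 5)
Your proposal is correct and follows essentially the same route as the paper's proof: the key new ingredient in both is the pigeonhole argument on the lengths $|u_{q_i}|$ (using $|z| \ge k-1$ so that each $u_{q_i}$ is a prefix of~$z$) to bound the number of states reached by~$y$ and surviving~$z$ by~$k$, after which the algorithm and length computation of \cref{lem-sc-main} are transcribed with the bounds from \cref{lem-extender-all-restivo}. The only differences are presentational (you absorb the leftover $kn$ term directly, where the paper bounds it by $n^2$ first).
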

\begin{proof}
Let $z,y$ be the words from \cref{lem-extender-all-restivo}.
We can assume that $|z| \ge k-1$.
First we argue that there are at most $k$ states that are reached by~$y$ and survive~$z$.
Towards a contradiction, suppose otherwise.
By the pigeonhole principle, there are two distinct states $q, q' \in Q$ that are reached by~$y$ and survive~$z$ and satisfy $|u_q| = |u_{q'}|$.
Since $|z| \ge k-1$, it follows that $u_q = u_{q'}$ is a prefix of~$z$, thus $q,q'$ are mergeable.
But $q,q'$ are reached by~$y$, contradicting the definition of~$y$.

It follows that the $k$ from \cref{sub-sc} is at most the $k$ from this section.
Mirroring exactly the proof of \cref{lem-sc-main} and using \cref{lem-extender-all-restivo}, we obtain a killing word~$w$ of length at most
\begin{align*} |w| &\ \le \ |y z| + k (n + |y z|) \ \le \ n^2 + (k+1)(|y|+|z|) \\ &\ \le \ n^2 + (k+1) k (k-1) (n+2) (n+1) \ \le \ (k+1) k^2 (n+2) (n+1)\,.
\end{align*}
\end{proof}

Finally we prove \cref{thm-restivo}:
\begin{proof}[Proof of \cref{thm-restivo}]
Since $k > 0$, it follows $|X| \ge 1$ and thus $n = m - |X| + 1 \le m$.
The result follows from \cref{lem-sc-main-restivo}.
\end{proof}

\section{Proof of \texorpdfstring{\cref{thm-long-product}}{Theorem~\ref{thm-long-product}}} \label{sec-proof-long-product}

%We prove Theorem~\ref{thm-long-product}:
\thmlongproduct*
%In terms of the previous notions we can rephrase \cref{thm-long-product} as follows:
%\renewcommand{\thmcontinues}[1]{rephrased}
%\begin{ourtheorem}[continues=thm-long-product]
%There is no polynomial~$p$ such that the following holds:
%\begin{quote}
%Let $M : \Sigma^* \to \{0,1\}^{Q \times Q}$ be a monoid morphism.
%Let $w_0 \in \Sigma^*$ be such that $M(w_0)$ has rank~$1$, and let $1$ be the minimum rank in $M(\Sigma^*)$.
%Then there is $w \in \Sigma^*$ with $|w| \le p(|Q|)$ such that $M(w_0) = M(w)$.
%\end{quote}
%\end{ourtheorem}
\begin{proof}
Denote by $p_i$ the $i$th prime number (so $p_1 = 2$).
Let $m \ge 1$.
Define:
\begin{align*}
\Sigma \ &:= \ \{a, b_1, \ldots, b_m\} \\
Q_i \    &:= \ \{(i,0), (i,1), \ldots, (i,p_i-1)\} \quad \text{for every } i \in \{1, \ldots, m\} \\
Q \      &:= \ \{0\} \cup \bigcup_{i=1}^m Q_i
\end{align*}
Further, define a monoid morphism $M : \Sigma^* \to \N^{Q \times Q}$ by setting for all $i \in \{1, \ldots, m\}$
\begin{align*}
M(a)(0,(i,0)) \ &:= \ 1 \\
M(a)((i,j),(i,j+1 \bmod p_i)) \ &:= \ 1 \quad\text{ for all } j \in \{0, \ldots, p_i-1\} \\
M(b_i)(0,0) \ &:= \ 1 \\
M(b_i)((i,j),0) \ &:= \ 1 \quad\text{ for all } j \in \{0, \ldots, p_i-1\}
\end{align*}
and setting all other entries of $M(a), M(b_1), \ldots, M(b_m)$ to~$0$, see \cref{fig-M3}.
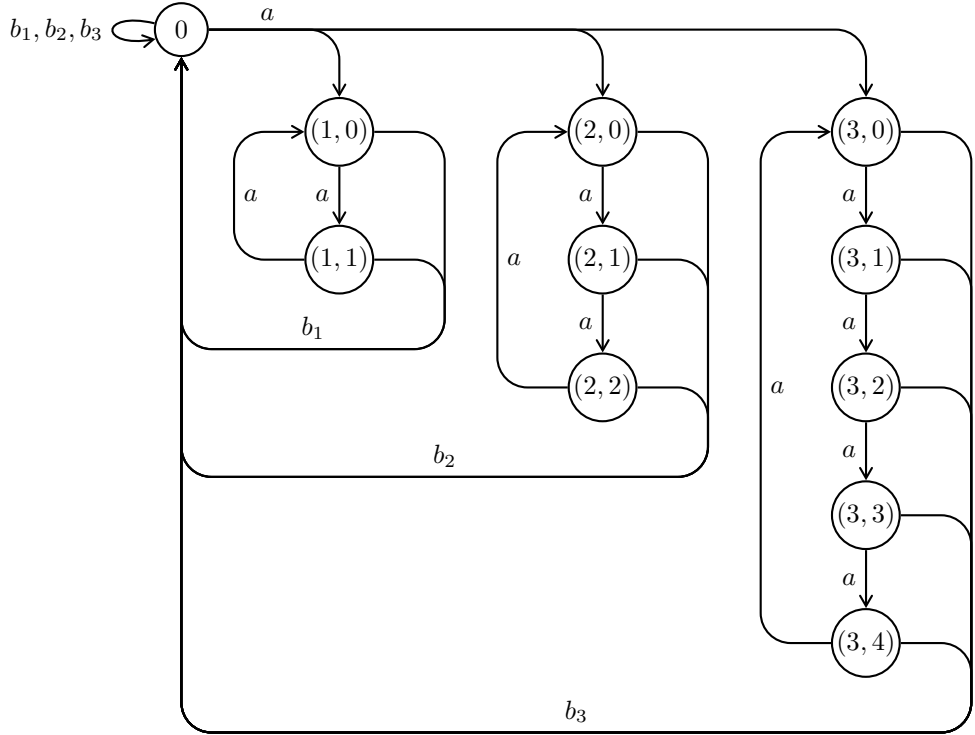
\begin{figure}
\begin{center}
\begin{tikzpicture}[xscale=3.5,yscale=1.7,AUT style]
\node[state] (0)  at (0.4,0.8) {$0$};
\node[state] (10) at (1,0) {$(1,0)$};
\node[state] (11) at (1,-1) {$(1,1)$};
\node[state] (20) at (2,0) {$(2,0)$};
\node[state] (21) at (2,-1) {$(2,1)$};
\node[state] (22) at (2,-2) {$(2,2)$};
\node[state] (30) at (3,0) {$(3,0)$};
\node[state] (31) at (3,-1) {$(3,1)$};
\node[state] (32) at (3,-2) {$(3,2)$};
\node[state] (33) at (3,-3) {$(3,3)$};
\node[state] (34) at (3,-4) {$(3,4)$};
\coordinate (r1) at (1.4,-1.7);
\coordinate (l1) at (0.4,  -1.7);
\coordinate (r2) at (2.4,-2.7);
\coordinate (l2) at (0.4,  -2.7);
\coordinate (r3) at (3.4,-4.7);
\coordinate (l3) at (0.4,  -4.7);
\path[->] (0) edge [loop,out=160,in=200,looseness=4] node[left] {$b_1,b_2,b_3$} (0);
\path[->] (10) edge node[left] {$a$} (11);
\path[->] (20) edge node[left] {$a$} (21);
\path[->] (21) edge node[left] {$a$} (22);
\path[->] (30) edge node[left] {$a$} (31);
\path[->] (31) edge node[left] {$a$} (32);
\path[->] (32) edge node[left] {$a$} (33);
\path[->] (33) edge node[left] {$a$} (34);
\draw[->,rounded corners=4mm] (11) -- +(-0.4,0) -- node[right] {$a$} ($ (10) + (-0.4,0) $) -- (10);
\draw[->,rounded corners=4mm] (22) -- +(-0.4,0) -- node[right] {$a$} ($ (20) + (-0.4,0) $) -- (20);
\draw[->,rounded corners=4mm] (34) -- +(-0.4,0) -- node[right] {$a$} ($ (30) + (-0.4,0) $) -- (30);
\draw[->,rounded corners=4mm] (11) -| (r1) -- node[above] {$b_1$} (l1) -- (0);
\draw[->,rounded corners=4mm] (10) -| (r1) -| (0);
\draw[->,rounded corners=4mm] (22) -| (r2) -- node[above] {$b_2$} (l2) -- (0);
\draw[->,rounded corners=4mm] (21) -| (r2) -| (0);
\draw[->,rounded corners=4mm] (20) -| (r2) -| (0);
\draw[->,rounded corners=4mm] (34) -| (r3) -- node[above] {$b_3$} (l3) -- (0);
\draw[->,rounded corners=4mm] (33) -| (r3) -| (0);
\draw[->,rounded corners=4mm] (32) -| (r3) -| (0);
\draw[->,rounded corners=4mm] (31) -| (r3) -| (0);
\draw[->,rounded corners=4mm] (30) -| (r3) -| (0);
\draw[->,rounded corners=4mm] (0) -- node[above,pos=0.45] {$a$} +(0.6,0) -- (10);
\draw[->,rounded corners=4mm] (0) -| (20);
\draw[->,rounded corners=4mm] (0) -| (30);
\end{tikzpicture}
\end{center}
\caption{Automaton representation of $M$ for $m=3$.}
\label{fig-M3}
\end{figure}
We have $M(\Sigma^*) \subseteq \{0, 1\}^{Q \times Q}$, i.e., $M(\Sigma^*)$ is an unambiguous monoid of relations.
For all $q \in Q$ and all $q' \in Q \setminus \{0\}$ we have $M(b_1)(q,q') = 0$, i.e., $M(b_1)$ has rank~$1$.
For all $w \in \Sigma^*$ there is $q \in Q$ with $M(w)(0,q) = 1$, i.e., $1$ is the minimum rank in $M(\Sigma^*)$.
A shortest word $w_0 \in \Sigma^*$ such that $M(w_0)$ has rank~$1$ and $M(w_0)(0,(i,p_i-1)) = 1$ holds for all $i \in \{1, \ldots, m\}$ is the word~$w_0 = b_1 a^P$ where $P = \prod_{i=1}^m p_i \ge 2^m$.
On the other hand, we have $|Q| = 1 + \sum_{i=1}^m p_i \in O(m^2 \log m)$ by the prime number theorem.

Hence there is no polynomial~$p$ such that $P \le p(|Q|)$ holds for all~$m$.
\end{proof}

\bibliographystyle{siamplain}% the recommended bibstyle
\bibliography{killingword}

\begin{thebibliography}{10}

\bibitem{JALC-2019-153}
{\sc D.~Ananichev and V.~Vorel}, {\em A new lower bound for reset threshold of
  binary synchronizing automata with sink}, Journal of Automata, Languages and
  Combinatorics, 24 (2019), pp.~153--164.

\bibitem{Bell12Mortality}
{\sc P.~Bell, M.~Hirvensalo, and I.~Potapov}, {\em Mortality for $2 {\times}2$
  matrices is {NP}-hard}, in Proceedings of Mathematical Foundations of
  Computer Science (MFCS), Springer, 2012, pp.~148--159.

\bibitem{BerstelPerrin86}
{\sc J.~Berstel and D.~Perrin}, {\em Trends in the theory of codes}, Bulletin
  of the EATCS, 29 (1986), pp.~84--95.

\bibitem{BerstelCodesAutomata}
{\sc J.~Berstel, D.~Perrin, and C.~Reutenauer}, {\em Codes and Automata},
  Cambridge University Press, 2009.

\bibitem{Carpi88}
{\sc A.~Carpi}, {\em On synchronizing unambiguous automata}, Theoretical
  Computer Science, 60 (1988), pp.~285--296.

\bibitem{Carpi17}
{\sc A.~Carpi and F.~D'Alessandro}, {\em On incomplete and synchronizing finite
  sets}, Theoretical Computer Science, 664 (2017), pp.~67--77.

\bibitem{DaubechiesLagarias01}
{\sc I.~Daubechies and J.~Lagarias}, {\em Corrigendum/addendum to: Sets of
  matrices all infinite products of which converge}, Linear Algebra and its
  Applications, 327 (2001), pp.~69--83.

\bibitem{game82}
{\sc P.~Goral\v{c}{\'{\i}}k, Z.~Hedrl{\'{\i}}n, V.~Koubek, and
  J.~Ry\v{s}linkov{\'{a}}}, {\em A game of composing binary relations},
  R.A.I.R.O. Informatique th\'eorique, 16 (1982), pp.~365--369.

\bibitem{Gusev11}
{\sc V.~Gusev and E.~Pribavkina}, {\em On non-complete sets and {R}estivo's
  conjecture}, in Proceedings of Developments in Language Theory (DLT),
  Springer, 2011, pp.~239--250.

\bibitem{HalavaHH07}
{\sc V.~Halava, T.~Harju, and M.~Hirvensalo}, {\em Undecidability bounds for
  integer matrices using {C}laus instances}, International Journal of
  Foundations of Computer Science, 18 (2007), pp.~931--948.

\bibitem{HornJohnson13}
{\sc R.~Horn and C.~Johnson}, {\em Matrix analysis}, Cambridge University
  Press, 2nd~ed., 2013.

\bibitem{Julia17}
{\sc S.~Julia, A.~Malapert, and J.~Provillard}, {\em A synergic approach to the
  minimal uncompletable words problem}, Journal of Automata, Languages and
  Combinatorics, 22 (2017), pp.~271--286.

\bibitem{Jungers08}
{\sc R.~Jungers, V.~Protasov, and V.~Blondel}, {\em Efficient algorithms for
  deciding the type of growth of products of integer matrices}, Linear Algebra
  and its Applications, 428 (2008), pp.~2296--2311.

\bibitem{Shallit09}
{\sc J.-Y. Kao, N.~Rampersad, and J.~Shallit}, {\em On {NFAs} where all states
  are final, initial, or both}, Theoretical Computer Science, 410 (2009),
  pp.~5010--5021.

\bibitem{Martugin08}
{\sc P.~Martugin}, {\em A series of slowly synchronizing automata with a zero
  state over a small alphabet}, Information and Computation, 206 (2008),
  pp.~1197--1203.

\bibitem{MikaSzykula19-arxiv}
{\sc M.~Mika and M.~Szyku\l{}a}, {\em Complexity of the {F}robenius monoid
  problem for a finite language}, CoRR, abs/1902.06702 (2019),
  \url{https://arxiv.org/abs/1902.06702}.

\bibitem{Paterson70}
{\sc M.~Paterson}, {\em Unsolvability in $3 {\times} 3$ matrices}, Studies in
  Applied Mathematics, 49 (1970), pp.~105--107.

\bibitem{Potapov17}
{\sc I.~Potapov and P.~Semukhin}, {\em Decidability of the membership problem
  for $2 {\times} 2$ integer matrices}, in Proceedings of the Symposium on
  Discrete Algorithms (SODA), SIAM, 2017, pp.~170--186.

\bibitem{Restivo81}
{\sc A.~Restivo}, {\em Some remarks on complete subsets of a free monoid}, in
  Quaderni de ``La Ricerca Scientifica''. Non-Commutative Structures in Algebra
  and Geometric Combinatorics, vol.~109, Consiglio Nazionale Delle Ricerche,
  1981, pp.~19--25.

\bibitem{RyzhikovSzykula18}
{\sc A.~Ryzhikov and M.~Szyku\l{}a}, {\em Finding short synchronizing words for
  prefix codes}, in Proceedings of Mathematical Foundations of Computer Science
  (MFCS), vol.~117 of LIPIcs, 2018, pp.~21:1--21:14.

\bibitem{Schutzenberger}
{\sc M.-P. Sch\"{u}tzenberger}, {\em On the definition of a family of
  automata}, Information and Control, 4 (1961), pp.~245--270.

\bibitem{Tzeng}
{\sc W.~Tzeng}, {\em A polynomial-time algorithm for the equivalence of
  probabilistic automata}, SIAM Journal on Computing, 21 (1992), pp.~216--227.

\bibitem{Volkov08}
{\sc M.~Volkov}, {\em Synchronizing automata and the {{\v{C}}}ern{\'y}
  conjecture}, in Proceedings of Language and Automata Theory and Applications
  (LATA), Springer, 2008, pp.~11--27.

\bibitem{WeberSeidlITA}
{\sc A.~Weber and H.~Seidl}, {\em On finitely generated monoids of matrices
  with entries in $\mathbb{N}$}, Informatique Th{\'e}orique et Applications, 25
  (1991), pp.~19--38.

\end{thebibliography}

\end{document}